\RequirePackage{etoolbox}
\csdef{input@path}{%
	{img/}
}%

\documentclass[12pt]{article}
\usepackage{amsmath}
\usepackage{graphicx,psfrag,epsf}
\usepackage{enumerate}
\usepackage{natbib}
\usepackage{url} 
\usepackage[skip=5pt]{caption}
\usepackage[normalem]{ulem}
\usepackage{amsthm}
\usepackage{amsmath}
\usepackage{natbib}
\usepackage[colorlinks,citecolor=blue,urlcolor=blue,filecolor=blue,backref=page]{hyperref}
\usepackage{graphicx}
\usepackage{algorithm2e}
\usepackage{amssymb}
\usepackage{amsbsy}
\usepackage{float}
\usepackage{comment}
\usepackage{mathtools}
\usepackage{color,soul}
\usepackage{blkarray}
\usepackage{xcolor}
\usepackage{tikz}

\RequirePackage{doi}

\newcommand{\wip}{\noindent\hl{[\#]} }
\newcommand{\bmms}{\textit{BM\&Ms} }
\newcommand{\iidsim}{\overset{iid}{\sim}} 

\newcommand{\textalt}[1]{\quad\text{#1}\quad}
\newcommand{\sigmasq}{\sigma^2}

\usepackage{bbm} 
\newcommand{\bml}{b}

\newcommand{\bm}[1]{\mathbf{#1}}

\usepackage{scalerel,stackengine}
\stackMath
\newcommand\reallywidehat[1]{%
	\savestack{\tmpbox}{\stretchto{%
			\scaleto{%
				\scalerel*[\widthof{\ensuremath{#1}}]{\kern-.6pt\bigwedge\kern-.6pt}%
				{\rule[-\textheight/2]{1ex}{\textheight}}
			}{\textheight}%
		}{0.5ex}}%
	\stackon[1pt]{#1}{\tmpbox}%
}

\usepackage{amsthm}

\theoremstyle{definition}
\newtheorem{definition}{Definition}[section]
\newtheorem{proposition}{Proposition}[section]
\newtheorem{corollary}{Corollary}[section]

\newcommand{\blind}{0}

\addtolength{\oddsidemargin}{-.5in}%
\addtolength{\evensidemargin}{-.5in}%
\addtolength{\textwidth}{1in}%
\addtolength{\textheight}{-.3in}%
\addtolength{\topmargin}{-.8in}%

\begin{document}

\def\spacingset#1{\renewcommand{\baselinestretch}%
{#1}\small\normalsize} \spacingset{1}


\if0\blind
{
  \title{\bf Bayesian Modular and Multiscale Regression}
  \author{Michele Peruzzi\thanks{
    This research was partially supported by grants from the United States Office of Naval Research and National Institutes of Health.
}\hspace{.2cm}\\
    Department of Decision Sciences, Bocconi University\\
    Department of Statistical Science, Duke University\\
    $ $\vspace{-0.4cm}\\
    David B. Dunson \\
    Department of Statistical Science, Duke University}
  \maketitle
} \fi

\if1\blind
{
  \bigskip
  \bigskip
  \bigskip
  \begin{center}
    {\Large \bf Bayesian Modular and Multiscale Regression}
\end{center}
  \medskip
} \fi

\bigskip
\begin{abstract}
We tackle the problem of multiscale regression for predictors that are spatially or temporally indexed, or with a pre-specified multiscale structure, with a Bayesian modular approach. The regression function at the finest scale is expressed as an additive expansion of coarse to fine step functions. Our Modular and Multiscale (M\&M) methodology provides multiscale decomposition of high-dimensional data arising from very fine measurements. Unlike more complex methods for functional predictors, our approach provides easy interpretation of the results. Additionally, it provides a quantification of uncertainty on the data resolution, solving a common problem researchers encounter with simple models on down-sampled data. We show that our modular and multiscale posterior has an empirical Bayes interpretation, with a simple limiting distribution in large samples. An efficient sampling algorithm is developed for posterior computation, and the methods are illustrated through simulation studies and an application to brain image classification. Source code is available as an \texttt{R} package at \url{github.com/mkln/bmms}.
\end{abstract}

\noindent%
{\it Keywords:} Bayesian, functional regression, Haar wavelets, high-dimensional data,  large p small n, modularization, multiresolution
\vfill

\newpage
\spacingset{1.45} 

\section{Introduction} \label{bmms:intro}
Modern researchers routinely collect very high-dimensional data that are spatially and/or temporally indexed, with the intention of using them as inputs in regression-type problems. A simple example with a binary output is \textit{time-series classification}; analogously, brain images can be used as diagnostic tools. In these cases, prediction of the outcome variable and interpretation of the results are the main goals, but obtainining clear interpretation and accurate prediction simultaneously is notoriously difficult. In fact, adjacent measurement locations tend to be highly correlated, with possibly huge numbers of measurements at a very high resolution. 

In these settings, directly inputing such data into usual regression methods leads to poor results. In fact, methods for dimensionality reduction that do not take advantage of predictor structure can have poor performance in estimating regression coefficients and sparsity patterns when the dimension is huge and predictors are highly correlated (see e.g. Figure \ref{brain}). Theoretical guarantees in such settings typically rely on strong assumptions on sparsity, low linear dependence, and high signal-to-noise (\citealt{irrepresentable, wasserman}; \citealt[Chapter~7]{Buhlmann11}; \citealt{ieee17}).

\begin{figure}
	\centering
	\includegraphics[width=0.7\textwidth]{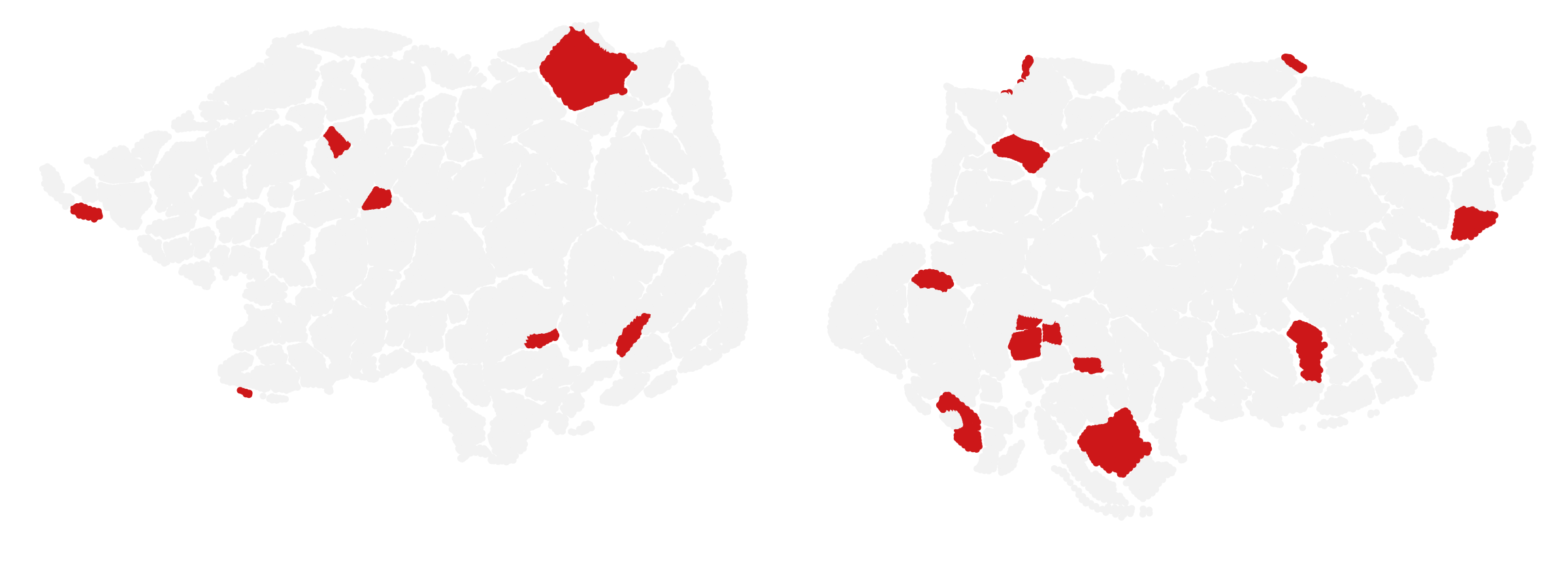}
	\caption{Brain regions parcellated according to \cite{gordon333} and selected in a Lasso model for a gender classification task.}\label{brain} 
\end{figure}

The above problems can be alleviated by down-sampling the data to lower resolutions before analysis. This is an appealing option because of the potential for huge dimensionality reduction. However, any specific resolution choice might be perceived as ad-hoc, and hide patterns at different scales that could instead be highlighted by a multiresolution approach. This problem cannot be solved by methods that somewhat take into account predictor structure, but only act on a single measurement scale, like the group Lasso of \cite{glasso}, the fused Lasso of \cite{fused}, the Bayesian method of  \cite{LiZhang10}, and data-driven projection approaches such as PCR \citep{pls_functionaldata}.

Alternatively, one can use methods for ``functional predictors" \citep{ramsaysilverman, morris, reiss}, which view  time- or spatially-indexed predictors as corresponding to realizations of a function. Typically this involves estimating a time or spatially-indexed coefficient function, which is assumed to be represented as a linear combination of basis functions. Specifically, multiscale interpretations are achievable via wavelet bases. However, a discrete wavelet transform of the data will not reduce its dimensionality, making down-sampling a necessary pre-processing step for the estimation of Bayesian wavelet regression (see e.g. \citealt{vannucci}). Additionally, wavelets require the basis functions to be orthogonal; this leads to benefits in terms of identifiability and performance in estimating the individual coefficients, while also leading to disadvantages relative to ``over-complete" specifications. In fact, wavelets cannot be used when there is uncertainty on multiple pre-specified resolutions that do not conform to a wavelet decomposition (e.g. a hour/day/week structure for time-series). This is also true for ongoing research in neuroscience devoted to the development of parcellation schemes that achieve reproducible and interpretable results \citep{neuro_whichfmri, neuro_parcellations}. Researchers might be interested in using these scales jointly in a regression model to ascertain their relative contribution, but this cannot be achieved with wavelets. For classical references on wavelet regression, see \cite{donohojohnstone94, donohojohnstone95, lasso_wavelet}; from a Bayesian perspective, see e.g. \cite{vannucci} or \cite{jeong_vannucci_ko}.

To solve these issues, we propose a class of Bayesian Modular and Multiscale regression methods  (\textit{BM\&Ms}), which express the regression function as an additive expansion of functions of data at increasing resolutions. In its simplest form, the regression function becomes an additive expansion of coarse to fine step functions. This implies that multiple down-samplings of the predictor are included within a single flexible multiresolution model. Our approach can be used when (1) there is a pre-determined multiscale structure, or uncertainty on a multiplicity of pre-specified resolutions, as in the case of brain atlases; (2) with temporally- or spatially-indexed predictors, when the goal is a multiscale interpretation of single-scale data. In the first case, our method can be directly used to ascertain the contribution of the pre-determined scales to the regression function. This goal cannot be achieved via wavelets. In the second case, \bmms are related to a Haar wavelet expansion but involve a simpler, non-orthogonal transformation that facilitates easy interpretation, and suggests a straightforward extension to scalar-on-tensor regression. We address the identifiability issues induced by this non-orthogonal expansion by taking a modularization approach. The resulting \textit{BM\&M} regression is stable, well identified, easily interpretable, and provides uncertainty quantification at different resolutions.  

The idea of modularization in Bayesian inference is that instead of using a fully Bayesian joint probability model for all components of the model, one ``cuts'' the dependence between different components or modules (see \cite{modular, bettertogether} and references therein).  Modularization has been commonly applied in joint modeling contexts for computational and robustness reasons.  For example, suppose that one defines a latent factor model for multivariate predictors $X$, with the goal of using factors $\eta$ in place of $X$ in a model for the response $Y$.  Under a coherent fully Bayes model, $Y$ will impact the posterior on $\eta$.  This is conceptually appealing in providing some supervision under which one may infer latent factors that are particularly informative about $Y$.  However, it turns out that there is a lack of robustness to model misspecification, with misspecification potentially leading to inferring factors that are primarily driven by improving fit of the model and are not interpretable as summaries of $X$.  Modularization solves this problem by not allowing $Y$ to influence the posterior of $\eta$; motivated by the practical improvements attributable to such an approach, WinBUGS has incorporated a \texttt{cut(.)} function to allow cutting of dependence in routine Bayesian inference \citep{Plummer15}.

Our multiscale setting is different than previous work on Bayesian modularization, in that we use modules to combine information from the same data at increasingly higher resolutions. \cite{ChenDunson17} recently proposed a modular Bayesian approach for screening in contexts involving massive-dimensional features, but there was no multiscale structure, allowance for functional predictors, or consideration of multiple predictors that simultaneously impact the response. 

Section \ref{modular_general} introduces \bmms in a general setting, highlighting the correspondence to a data-dependent prior in a coherent Bayesian model. Section \ref{modular_linear} focuses on linear regression. 
Section \ref{comp} outlines an algorithm to sample from the modular posterior. Section \ref{data} contains applications on simulated data and to a brain imaging classification task. Proofs and technical details are included in an Appendix.

\section{A modular approach for multiscale regression} \label{modular_general}
 
We consider a regression problem linking a scalar response $y_i$ to an input vector $\bm{x}_i = ( x_{s,i} )_{s \in S} $ of dimension $p \gg n$, for each subject $i \in \{1, \dots, n\}$. For example, a subject's health outcome may be linked to the output from a high-resolution recording device. The goal is to predict the outcome variable and explain its variability across subjects by identifying specific patterns in the sensor recording at different resolutions. 

We denote the vector of responses by $y$ and the raw data matrix by $X_{S}$. Each row of $X_{S}$ can be down-sampled to get a new design matrix $X_{S_j}$, where $S_j$ is a lower resolution grid such that $|S_j| < |S|$. Down-sampling can be achieved by summing or averaging adjacent columns, or subsetting them. We simplify the notation slightly by calling $X = X_{S}$ and $X_j = X_{S_j}$. We consider the same data at increasing resolutions in an additive model:
\begin{align} \label{modular:joint_model}
y = f_0(X_0) + \dots + f_j(X_j) + \dots + f_K(X_K) + \varepsilon,
\end{align}
where $f_j$ is the resolution $j$ contribution to the regression function. With this additive multiresolution expansion, it is difficult to disambiguate the impact of the coarse scales from the finer ones, leading to identifiability issues. One may be able to fit $y$ using only a fine scale component, with the coarse scales discarded.  If we were to attempt fully Bayes inferences by placing priors on the different component functions, large posterior dependence would lead to substantial computational problems (e.g. very poor mixing of Markov chain Monte Carlo algorithms) and it would not be possible to reliably interpret the different $f_j$s. This happens in particular if each $f_j$ is linear, as seen in Section \ref{modular_linear}. 

We bypass these problems by adopting a modular approach \citep{modular, bettertogether}, splitting the overall joint model (\ref{modular:joint_model}) into components or \textit{modules} which are kept partly separate from each other, to get a \textit{modular posterior}. In the following, for $j > i$ we use the notation $A_{i:j} = \{ A_i, \dots, A_j \}$.

\begin{definition}
	Within the overall model (\ref{modular:joint_model}), \textit{module j} for data at resolution $S_j$ consists of a prior for $f_j$, and a model for $y | f_{1:j}$:
	\[
	 f_j | f_{1:j-1} \sim \pi(f_j| f_{1:j-1}) \qquad
	 y | f_j, f_{1:j-1} \sim p_j(y | f_j, f_{1:j-1}) \]
	where model $p_j$ estimates $f_j$ via $ y = f_1(X_1) + \dots + f_{j-1}(X_{j-1}) + f_j (X_j) + \varepsilon_j$, and $f_1, \dots, f_{j-1}$ are considered known. The output from module $j$ is the (conditional) posterior for $f_j$ obtained by prior $\pi(f_j)$ and model $p_j$, and we denote it by $m(f_j | f_{1:j-1})$:
	\begin{align} \label{bmms_general:module}
	m(f_j | f_{1:j-1}) = \frac{\pi(f_j | f_{1:j-1}) p_j(y | f_{1:j}, X_{1:j})}{p_j(y | f_{1:j-1}, X_{1:j})}.
	\end{align}
\end{definition}
Thus for the full model (\ref{modular:joint_model}) we build $K$ modules, using increasingly higher resolution data, with each module being a \textit{refinement} on previous output. 
\begin{definition} \label{bmms_general:posterior}
	The modular prior distribution for  $\bm{f} = (f_1, \dots, f_K)$ corresponds to $
	p_M( \bm{f} ) = \pi(f_1) \cdots \pi(f_j | f_{1:j-1}) \cdots \pi(f_K | f_{1:K-1})$, whereas the modular posterior distribution $p_M( \bm{f} )$ is: 
	\begin{align*}
	p_M( \bm{f} | y, X_{1:K} ) &= m(f_1) m(f_2 | f_1) \cdots m(f_K | f_{1:K-1}),
	\end{align*} 
\end{definition}
\noindent thus collecting the posteriors in (\ref{bmms_general:module}). Each module refines the output from previous modules by using higher resolution data, and the modular posterior is obtained by aggregating all refinements. 
Modularity is evidenced by resolution dependence, which is only allowed downwardly, as opposed to letting it be bidirectional as in a fully Bayes approach.
\begin{proposition} 
There exists a data-dependent prior $\pi_d$ and a likelihood $p_d$ such that $ p_M(\bm{f} | y, X) \propto \pi_d(\bm{f}) p_d(y | \bm{f}, X) $. Specifically, $p_d(\cdot)$ is the likelihood corresponding to model $ y = f_1(X_1) + f_2(X_2) + \varepsilon $, and the data-dependent prior is $\pi_d(\bm{f}) = \pi(f_1) \pi(f_2|f_1) \frac{p_1(f_1 | y, X_{1})}{p_2(f_1 | y, X_{1:2})}$.
\end{proposition}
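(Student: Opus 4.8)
The plan is to verify the identity directly, by expanding the two-module posterior and regrouping factors while treating everything that does not depend on $\bm{f} = (f_1, f_2)$ as an absorbable normalizing constant. Since the proposition concerns the full model $y = f_1(X_1) + f_2(X_2) + \varepsilon$, I specialize the general construction of Definition \ref{bmms_general:posterior} to $K = 2$, so that $p_M(\bm{f} \mid y, X) = m(f_1)\, m(f_2 \mid f_1)$.

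First I would substitute the module formula (\ref{bmms_general:module}) into each factor, which gives
\[
p_M(\bm{f}\mid y, X) = \frac{\pi(f_1)\, p_1(y\mid f_1, X_1)}{p_1(y\mid X_1)}\cdot\frac{\pi(f_2\mid f_1)\, p_2(y\mid f_{1:2}, X_{1:2})}{p_2(y\mid f_1, X_{1:2})}.
\]
The crucial observation is that $p_2(y \mid f_{1:2}, X_{1:2})$ is exactly the likelihood $p_d(y \mid \bm{f}, X)$ of the full model, so I would factor it out. The denominator $p_1(y\mid X_1)$ is the marginal likelihood of module $1$ and is free of $\bm{f}$, so it may be moved into the proportionality constant. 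This already yields $p_M(\bm{f}\mid y,X) \propto \pi(f_1)\,\pi(f_2\mid f_1)\,\frac{p_1(y\mid f_1, X_1)}{p_2(y\mid f_1, X_{1:2})}\, p_d(y\mid\bm{f},X)$, i.e. the claim with $\pi_d$ expressed through a ratio of \emph{likelihoods} rather than posteriors.

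To recover the stated posterior-ratio form of $\pi_d$, I would apply Bayes' theorem within each module. In module $1$, $p_1(f_1\mid y, X_1) = \pi(f_1)\, p_1(y\mid f_1, X_1)/p_1(y\mid X_1)$. In module $2$, integrating $f_2$ out of the joint posterior shows $p_2(f_1\mid y, X_{1:2}) = \pi(f_1)\, p_2(y\mid f_1, X_{1:2})/p_2(y\mid X_{1:2})$, where I use that $p_2(y \mid f_1, X_{1:2}) = \int \pi(f_2\mid f_1)\, p_2(y\mid f_{1:2}, X_{1:2})\, df_2$ is precisely the module-$2$ denominator. Taking the ratio, the common prior $\pi(f_1)$ cancels and the two marginal likelihoods $p_1(y\mid X_1)$, $p_2(y\mid X_{1:2})$ contribute only an $\bm{f}$-free constant, so $p_1(f_1\mid y, X_1)/p_2(f_1\mid y, X_{1:2}) \propto p_1(y\mid f_1, X_1)/p_2(y\mid f_1, X_{1:2})$. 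Substituting this into the previous display delivers the result.

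The computation is essentially bookkeeping; the step requiring the most care is the module-$2$ marginalization, where one must confirm that integrating out $f_2$ converts the joint posterior's likelihood into exactly the denominator $p_2(y\mid f_1, X_{1:2})$ appearing in $m(f_2 \mid f_1)$. Once that identification is secured, every remaining discrepancy between the likelihood-ratio and posterior-ratio forms of $\pi_d$ is constant in $\bm{f}$ and is absorbed by $\propto$, so no integrability or positivity hypotheses beyond those already implicit in the module definitions are needed.
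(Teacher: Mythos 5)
Your proposal is correct and follows essentially the same route as the paper's proof: expand $p_M = m(f_1)\,m(f_2\mid f_1)$, identify $p_2(y\mid f_{1:2},X_{1:2})$ as the full-model likelihood, and convert the remaining likelihood ratio $p_1(y\mid f_1,X_1)/p_2(y\mid f_1,X_{1:2})$ into the posterior ratio $p_1(f_1\mid y,X_1)/p_2(f_1\mid y,X_{1:2})$ via Bayes' theorem in each module. The only cosmetic difference is that you absorb the marginal likelihoods into the proportionality constant and obtain $p_2(f_1\mid y,X_{1:2})$ by directly integrating out $f_2$, whereas the paper keeps exact equalities and reaches the same identity through a chain-rule factorization of $p_2(y\mid X_{1:2})$.
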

\begin{proof}
See Appendix \ref{appendix:sec2}.
\end{proof}
\noindent The above proposition implies our modular approach to multiscale regression will resemble a fully Bayesian model if $\frac{p_1(f_1 | y, X_{1})}{p_2(f_1 | y, X_{1:2})} \approx 1$, i.e. if the modules agree on the marginal posterior probability of $f_1$, the low resolution contribution to the regression function.

\section{\textit{BM\&Ms} for linear regression} \label{modular_linear}
We now assume that $f_j = X_j \theta_j$ and our goal is to study $\theta_j$ for $j=1,\dots,K$ in the model
\begin{align} \label{modular_linear:eq:complete_model} y = X_1 \theta_1 + \dots + X_K\theta_K + \varepsilon_K, \end{align}
where $\varepsilon_K \sim N(0, \sigmasq_K)$. The model includes data up to resolution $S_K$. 
We also assume that $X_j = X_{j+1} L_j = X_K \mathcal{L}_j$, meaning that $L_j$ and $\mathcal{L}_j$ respectively down-sample $X_{j+1}$ and $X_K$ to $X_j$.\footnote{Thus $\mathcal{L}_j = L_{K-1} \cdots L_1$} 
This allows us to decompose $\beta_K$, the usual regression coefficient of $y$ on $X_K$ as:
\begin{equation}\label{modular_linear:multiresolution_decomp} y = X_K \left( \mathcal{L}_1\theta_1 + \dots + \mathcal{L}_K\theta_K \right) + \varepsilon_K 
= X_K \beta_K + \varepsilon_K,
\end{equation}
thus interpreting $\theta_j$ as the contribution of resolution $S_j$ to the regression function. 

However, model (\ref{modular_linear:eq:complete_model}) is ill-posed, leading to problems with standard techniques for model fitting.  In fact, the effective design matrix $X = [X_1 \cdots X_K]$ is such that $det(X'X) = 0$, as the columns of $X_j$ are linear combinations of columns of $X_{j+1}$.  One can potentially obtain a well defined posterior through an informative prior for ${\theta}$.  However, this posterior will be highly dependent on the prior, as the likelihood has many flat regions.  

We solve this problem via modularization as in section \ref{modular_general}. The modular posterior is 
\begin{align*} 
p_M( {\theta} | y, X_{1:K}) 
&=\frac{\pi(\theta_1)p_1(y|\theta_1,X_1)}{p_1(y| X_1)} \cdots \frac{\pi(\theta_K | \theta_{1:K-1})p_K(y| \theta_{1:K}, X_{1:K})}{p_K(y| \theta_{1:K-1}, X_{1:K} )},
\end{align*}
where $p_j(\cdot)$ is the likelihood for the response $y$ under $ y = \sum_{h=1}^{j} X_{h}\theta_h + \varepsilon_j$, $\varepsilon_j \sim N(0, \sigmasq_j)$, and $\theta_1, \dots, \theta_{j-1}$ are considered known.  The posterior distribution for the coarsest scale coefficient $\theta_1$ is derived treating all the finer scale coefficients as equal to zero; this makes $\theta_1$ identifiable and interpretable as producing the best coarse scale approximation to the regression function.  In defining the posterior for $\theta_2$, we then condition on $\theta_1$ and set coefficients $\theta_3,\ldots,\theta_K$ equal to zero, and so on.  Linearity allows us to write $p_j(\cdot)$ as
\[ y - \sum_{h=1}^{j-1}X_h\theta_h = X_{j}\theta_j + \varepsilon_j, \textalt{ } \varepsilon_j \sim N(0, \sigmasq_j),\]
which essentially means that we are estimating $\theta_j$ (and $\sigmasq_j$) using the residuals $e_{j-1} =  y - \sum_{h=0}^{j-1}X_h\theta_h$ from the previous modules as responses for the current module. Hence, the modular posterior for (\ref{modular_linear:eq:complete_model}) is built using simpler, well-identified single-scale models as modules. For example, we address uncertainty on two resolutions with the following model:\footnote{An overview of the recurrent notation is in Appendix \ref{notation}}
\begin{equation} \label{modular_linear:two_scale:eq}
y = X_{1} \theta_{1} +  X_{2}\theta_{2} + \varepsilon_2, \textalt{ } \varepsilon \sim N(0, \sigmasq_2 I_n),
\end{equation}
assigning priors $\theta_j \sim N(m_j, \sigmasq_j M_j)$, $\sigmasq_j \sim \text{InvG}(a,b)$ for $j\in \{1,2 \}$, and using a first module corresponding to model $y = X_1 \theta_1 + \varepsilon_1$ with $\varepsilon_1 \sim N(0, \sigmasq_1 I_n)$.
\begin{proposition}\label{modular_linear:prop_joint_posterior}
	The modular posterior of $\theta | \sigmasq_1, \sigmasq_2$ for model (\ref{modular_linear:two_scale:eq}) is $N\left( \mu_{1:2}, \Sigma_{1:2} \right)$ where
	\begin{align} \label{modular_linear:joint_posterior}
	\begin{array}{ccc}
	\mu_{1:2} = \begin{bmatrix} \mu_1 \\ \mu_2 \end{bmatrix} = \begin{bmatrix} \mu_{\beta_1} \\ \mu_{\beta_2} - Q_1 \mu_{\beta_1} \end{bmatrix} & \qquad & \Sigma_{1:2} = \begin{bmatrix} \sigmasq_1 \Sigma_1 & -\sigmasq_1 \Sigma_1 Q_1' \\ -\sigmasq_1 Q_1 \Sigma_1 & \sigmasq_2 \Sigma_2 + \sigmasq_1 Q_1 \Sigma_1 Q_1'  \end{bmatrix},
	\end{array}
	\end{align}
	with $Q_1$ = $\Sigma_2X_2'X_1$, and $\mu_{\beta_j} = \Sigma_j^{-1} (M_j m_j + X_j'y) $ for $j\in \{1,2 \}$.\footnote{Note that $\mu_{\beta_j}$ is the posterior mean we would obtain from a single resolution model of the form $
	y = X_j \beta_j + \varepsilon_j$ when we assign prior $\beta_j \sim N(m_j, \sigmasq_j M_j)$, $j \in \{1,2 \}$.} Proof in Appendix \ref{appendix:twoscale_posterior}.
\end{proposition}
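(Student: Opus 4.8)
The plan is to exploit the fact that, by Definition \ref{bmms_general:posterior}, the two-scale modular posterior factors as $p_M(\theta \mid y, X_{1:2}) = m(\theta_1)\, m(\theta_2 \mid \theta_1)$, which is precisely the marginal-times-conditional representation of a linear-Gaussian hierarchical model. I would therefore first identify each factor as the posterior of an ordinary conjugate Gaussian regression, and then assemble the joint mean and covariance by propagating the first-stage uncertainty through the (affine) conditional.

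For the first factor, module $1$ is exactly the single-resolution model $y = X_1\theta_1 + \varepsilon_1$ with $\varepsilon_1 \sim N(0,\sigmasq_1 I_n)$ and conjugate prior $\theta_1 \sim N(m_1,\sigmasq_1 M_1)$, so standard normal--normal updating gives $m(\theta_1) = N(\mu_{\beta_1}, \sigmasq_1\Sigma_1)$ with $\Sigma_1 = (X_1'X_1 + M_1^{-1})^{-1}$. This already yields $\mu_1 = \mu_{\beta_1}$ and the top-left block $\sigmasq_1\Sigma_1$. For the second factor, module $2$ treats $\theta_1$ as known and regresses the residual $y - X_1\theta_1$ on $X_2$; the same conjugate calculation, now with response $y - X_1\theta_1$, gives $m(\theta_2 \mid \theta_1) = N(\mu_{\beta_2} - Q_1\theta_1,\ \sigmasq_2\Sigma_2)$. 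The two features that make everything work are visible here: the conditional mean is \emph{affine} in $\theta_1$, with $Q_1 = \Sigma_2 X_2'X_1$ emerging from the $X_2'X_1\theta_1$ cross-term, and the conditional covariance $\sigmasq_2\Sigma_2$ does \emph{not} depend on $\theta_1$.

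I would then assemble the joint law. Writing $\theta_2 = \mu_{\beta_2} - Q_1\theta_1 + u$ with $u \sim N(0,\sigmasq_2\Sigma_2)$ independent of $\theta_1$, the pair $(\theta_1,\theta_2)$ is jointly Gaussian, and the four defining moments follow from the laws of total expectation and covariance: $E[\theta_2] = \mu_{\beta_2} - Q_1\mu_{\beta_1} = \mu_2$; $\mathrm{Cov}(\theta_1,\theta_2) = -\mathrm{Cov}(\theta_1)Q_1' = -\sigmasq_1\Sigma_1 Q_1'$; and $\mathrm{Cov}(\theta_2) = Q_1\,\mathrm{Cov}(\theta_1)\,Q_1' + \sigmasq_2\Sigma_2 = \sigmasq_2\Sigma_2 + \sigmasq_1 Q_1\Sigma_1 Q_1'$, reproducing $\mu_{1:2}$ and $\Sigma_{1:2}$ exactly. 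An equivalent route is to multiply the two Gaussian densities, collect the quadratic form in $(\theta_1',\theta_2')'$ into a block precision matrix, and invert it; this is more mechanical but requires a block inversion.

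I do not expect a genuine obstacle: the result is an identity whose only real content is careful bookkeeping. The point most likely to trip up a direct calculation is the bottom-right block of $\Sigma_{1:2}$, which must correctly combine the \emph{propagated} first-stage uncertainty $\sigmasq_1 Q_1\Sigma_1 Q_1'$ with the intrinsic second-stage variance $\sigmasq_2\Sigma_2$; relatedly, one must track the affine dependence on $\theta_1$ (and the sign of $Q_1$) consistently across both the cross-covariance and the variance of $\theta_2$. Framing the derivation through the hierarchical representation $\theta_2 = \mu_{\beta_2} - Q_1\theta_1 + u$, rather than through brute-force density multiplication, keeps these terms transparent and sidesteps the block inversion altogether.
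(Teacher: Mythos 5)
Your proposal is correct and follows essentially the same route as the paper's Appendix~\ref{appendix:twoscale_posterior}: derive each module's conjugate Gaussian posterior, observe that the second module's conditional mean is affine in $\theta_1$ with slope $-Q_1$ and constant conditional covariance $\sigmasq_2\Sigma_2$, and then combine the two via the properties of the normal distribution. You merely spell out the final assembly (total expectation and total covariance) that the paper compresses into ``the end result follows from the properties of the Normal distribution,'' which is a welcome clarification rather than a deviation.
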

\noindent Finally, note we can estimate $\beta_2$ in $y = X_2 \beta_2 + \varepsilon_2$ by accumulating all components, i.e. using $L_1 \theta_1 + \theta_2$, which reconstructs the decomposition in (\ref{modular_linear:multiresolution_decomp}).
\subsection{Asymptotics of \textit{BM\&Ms} in linear regression}\label{modular_linear:bvm}
We now assume that $(y, \mathcal{X})$ were generated according to a process such that 
\[ \frac{1}{n} \begin{bmatrix} y'y & y'\mathcal{X} \\ \mathcal{X}'y & \mathcal{X}'\mathcal{X} \end{bmatrix} \xrightarrow[a.s.]{} \begin{bmatrix} \omega_{yy} & \mathbf{\omega}_{y \mathbf{x}} \\ \mathbf{\omega}_{\mathbf{x} y} & \Omega \end{bmatrix}, \]
a positive definite matrix, and $ y | \mathcal{X}, \sigmasq \sim N(\mathcal{X} b, \sigmasq I_n)$, where $b \in \mathbb{R}^p$ with dimension not depending on the sample size. We consider a two-scale linear model like in (\ref{modular_linear:two_scale:eq}). We assume $X_2 = \mathcal{X}\mathcal{L}_2$, $X_1 = X_2 L_1$, and assign priors on $\theta_j$ that have positive density on $\mathbb{R}^{p_j}$. Finally, we assume $\sigmasq_j$ is known for each module.
\begin{proposition}
	The modular posterior distribution of $(\theta_1, \theta_2)$ in model $ y = X_1 \theta_1 + X_2 \theta_2 + \varepsilon $ is approximated by $N(\bar{\mu}_{1:2}, \bar{\Sigma}_{1:2} )$, where
\begin{equation*} \label{modular_linear:asymptotic_dist}
\begin{aligned}
\begin{array}{ccc}
\bar{\mu}_{1:2} = 
\begin{bmatrix}
\theta_1^* \\
\theta_2^*
\end{bmatrix} = \begin{bmatrix}
\beta_1^* \\
\beta_2^* - L_1 \beta_1^*
\end{bmatrix} & \quad & 
\bar{\Sigma}_{1:2} =
\begin{bmatrix}
\sigmasq_1\Omega_1^{-1} & - \sigmasq_1 \Omega_1^{-1}L_1'\\
- \sigmasq_1 L_1 \Omega_1^{-1} & \sigmasq_2 \Omega_2^{-1} +  \sigmasq_1 L_1 \Omega_1^{-1} L_1'
\end{bmatrix}
\end{array}
\end{aligned}
\end{equation*}
where $\beta_j^*$ is the pseudo-true value of $b$ under the model $y = X_j \beta_j + \varepsilon$.
	\begin{proof}
		See Appendix \ref{appendix:bvm}.
	\end{proof}
\end{proposition}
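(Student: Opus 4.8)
The plan is to use the modular (hierarchical) factorization of the posterior together with a module-wise Bernstein--von Mises argument. By construction the two-scale modular posterior factors as $p_M(\theta_1,\theta_2\mid y) = m(\theta_1)\,m(\theta_2\mid\theta_1)$, where each factor is the posterior of a \emph{single-scale} Gaussian-likelihood linear regression with known error variance and a prior of positive density. For such a model the log-likelihood is exactly quadratic in the coefficient, so the model is locally asymptotically normal and the (possibly misspecified) Bernstein--von Mises theorem (Kleijn and van der Vaart) applies: each module's posterior is, in total variation, asymptotically Gaussian, centered at the pseudo-true value and with covariance equal to the inverse curvature. I would therefore (i) identify the limiting data functionals and the pseudo-true centers, (ii) apply module-wise BvM, and (iii) reassemble the joint law from the conditional structure. (When the priors are Gaussian the module posteriors are exactly Gaussian and one may instead pass to the limit directly in Proposition \ref{modular_linear:prop_joint_posterior}; the limits computed below are exactly those that route requires, since there $Q_1 = \Sigma_2 X_2'X_1 \to L_1$.)

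First I would record the almost-sure limits implied by the data-generating assumption. Since $X_2 = \mathcal{X}\mathcal{L}_2$ and $X_1 = X_2 L_1 = \mathcal{X}\mathcal{L}_2 L_1$, the convergence $\tfrac1n\mathcal{X}'\mathcal{X}\to\Omega$ gives $\tfrac1n X_2'X_2\to\Omega_2 := \mathcal{L}_2'\Omega\mathcal{L}_2$ and $\tfrac1n X_1'X_1\to\Omega_1 := L_1'\Omega_2 L_1$, while $\tfrac1n\mathcal{X}'y\to\omega_{xy}$ gives $\tfrac1n X_2'y\to\mathcal{L}_2'\omega_{xy}$ and $\tfrac1n X_1'y\to L_1'\mathcal{L}_2'\omega_{xy}$. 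Because the error variance is known and the working model is Gaussian, the pseudo-true (Kullback--Leibler minimizing) value under $y = X_j\beta_j + \varepsilon$ is the population least-squares coefficient $\beta_j^* = \Omega_j^{-1}\,\lim_n \tfrac1n X_j'y$; in particular $\beta_1^* = \Omega_1^{-1}L_1'\mathcal{L}_2'\omega_{xy}$ and $\beta_2^* = \Omega_2^{-1}\mathcal{L}_2'\omega_{xy}$. I would also record the exact identity $(X_2'X_2)^{-1}X_2'X_1 = L_1$, which holds for every $n$ because $X_1 = X_2 L_1$.

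For module $1$, BvM gives $\theta_1 \rightsquigarrow N(\beta_1^*, \sigmasq_1\Omega_1^{-1})$, so $\theta_1^* = \beta_1^*$ and the top-left block of the covariance is $\sigmasq_1\Omega_1^{-1}$. For module $2$, conditionally on $\theta_1$ the model regresses the residual $y - X_1\theta_1$ on $X_2$; its least-squares center is $(X_2'X_2)^{-1}X_2'y - L_1\theta_1 \to \beta_2^* - L_1\theta_1$ and its curvature does not involve $\theta_1$, so conditional BvM gives $\theta_2\mid\theta_1 \rightsquigarrow N(\beta_2^* - L_1\theta_1,\ \sigmasq_2\Omega_2^{-1})$. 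Writing $\theta_2 = (\beta_2^* - L_1\theta_1) + \eta$ with $\eta\perp\theta_1$ and $\eta\sim N(0,\sigmasq_2\Omega_2^{-1})$, the laws of total expectation and covariance yield the asymptotic mean $\beta_2^* - L_1\beta_1^* = \theta_2^*$, the cross-covariance $\mathrm{Cov}(\theta_1,\theta_2) = -\sigmasq_1\Omega_1^{-1}L_1'$, and $\mathrm{Var}(\theta_2) = \sigmasq_2\Omega_2^{-1} + \sigmasq_1 L_1\Omega_1^{-1}L_1'$, which is exactly $\bar\mu_{1:2}$ and $\bar\Sigma_{1:2}$.

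The steps are mostly bookkeeping, and the main obstacle is the rigorous justification of the BvM step under the two layers of the hierarchy. Concretely, one must (a) check the conditions for misspecified BvM --- here the exactly quadratic Gaussian log-likelihood makes the LAN expansion exact and the sandwich covariance collapse to $\sigmasq_j\Omega_j^{-1}$, and positivity of the prior density at $\beta_j^*$ supplies the remaining requirement; and (b) make the module-$2$ statement uniform, i.e. establish the conditional normal approximation uniformly over $\theta_1$ in a $1/\sqrt{n}$-neighborhood of $\beta_1^*$ and combine it with the $1/\sqrt{n}$-concentration of $m(\theta_1)$. Two smaller points also need care: the covariance claim is to be read after the usual $\sqrt{n}$-rescaling (so that $\Omega_j^{-1}$ is the limit of $n\Sigma_j$ rather than of $\Sigma_j$ itself, which vanishes), and invertibility of $\Omega_1 = L_1'\Omega_2 L_1$ and $\Omega_2 = \mathcal{L}_2'\Omega\mathcal{L}_2$ requires the down-sampling maps $\mathcal{L}_2$ and $L_1$ to have full column rank, whence they inherit positive-definiteness from $\Omega$.
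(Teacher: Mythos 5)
Your proposal is correct and follows essentially the same route as the paper's Appendix B proof: module-wise (misspecified) Bernstein--von Mises approximations centered at the pseudo-true least-squares values, the identity $(X_2'X_2)^{-1}X_2'X_1 = L_1$, and reassembly of the joint Gaussian from the conditional structure. Your added remarks on uniformity of the conditional approximation, the $\sqrt{n}$ rescaling, and the rank conditions on the coarsening maps are care the paper leaves implicit, but they do not change the argument.
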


\noindent This is also the asymptotic distribution of $(\hat{\theta}_1, \hat{\theta}_2)$, where $\hat{\theta}_1 = \hat{\beta}_1$ and $\hat{\theta}_2$ is the least-squares estimator obtained by regressing $y-X_1 \hat{\beta}_1$ on $X_2$. Hence, in large samples, \bmms correspond to a sequential least squares procedure that regresses the residuals of coarser models on higher resolution data. Our approach propagates uncertainty across multiple stages on finite samples, unlike many two-stage procedures (see e.g. \cite{murphytopel} for a treatment of two-stage estimators in econometrics). 
\begin{corollary} \label{bvm:corollary1}
	The large sample distributions of $\theta_1$ and $L_1\theta_1 + \theta_2$ are approximated by $N(\beta_1^*, \frac{\sigmasq_2}{n} \Omega_1^{-1})$ and $N(\beta_2^*, \frac{\sigmasq_2}{n} \Omega_2^{-1})$, respectively. In other words, accumulating the modular posterior mean components up to $j$ results in a consistent and asymptotically efficient estimator for $\beta_j^*$.
\end{corollary}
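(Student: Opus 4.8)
The plan is to derive both statements directly from the joint asymptotic law $N(\bar{\mu}_{1:2},\bar{\Sigma}_{1:2})$ established in the preceding Proposition, viewing the first claim as a marginalization and the second as a linear pushforward, and then to read off consistency and efficiency by matching the resulting Gaussians to the ordinary least-squares estimators at the corresponding resolution.

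For $\theta_1$ I would simply extract the top-left block of the joint law: its marginal is centered at $\theta_1^*=\beta_1^*$, which already gives consistency for the coarse-scale pseudo-true value. To match the stated covariance I would identify this marginal with the sampling law of the coarse quasi-least-squares estimator $\hat{\beta}_1=(X_1'X_1)^{-1}X_1'y$; since $\beta_1^*$ is the $L_2$-projection coefficient, the approximation residual is orthogonal to $X_1$ in the limit and only $\varepsilon$ contributes to the variance, so under the true process $y=\mathcal{X}b+\varepsilon$ with $\varepsilon\sim N(0,\sigmasq_2 I_n)$ the robust (sandwich) variance is $\sigmasq_2 (X_1'X_1)^{-1}\approx \frac{\sigmasq_2}{n}\Omega_1^{-1}$. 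In other words the true noise level $\sigmasq_2$ enters the meat of the sandwich in place of the module-based $\sigmasq_1$ carried by the posterior block, giving exactly the QMLE-efficient variance for $\beta_1^*$.

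For the accumulated coefficient I would write $L_1\theta_1+\theta_2$ as the image of $(\theta_1,\theta_2)$ under the linear map $A=[\,L_1\;\;I\,]$, so that its law is $N(A\bar{\mu}_{1:2},A\bar{\Sigma}_{1:2}A')$. The mean telescopes to $L_1\beta_1^*+(\beta_2^*-L_1\beta_1^*)=\beta_2^*$. For the covariance I would verify the one routine cancellation: writing $\Sigma_{ab}$ for the blocks of $\bar{\Sigma}_{1:2}$ and expanding $A\bar{\Sigma}_{1:2}A'=L_1\Sigma_{11}L_1'+\Sigma_{21}L_1'+L_1\Sigma_{12}+\Sigma_{22}$, the four copies of $\sigmasq_1 L_1\Omega_1^{-1}L_1'$ appear with signs $+,-,-,+$ and cancel, leaving exactly $\sigmasq_2\Omega_2^{-1}$ (i.e.\ $\frac{\sigmasq_2}{n}\Omega_2^{-1}$ in the normalized convention).

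Finally, to upgrade the distributional statement to the efficiency claim I would invoke the nesting assumption $X_1=X_2 L_1$, which yields the exact finite-sample identity $(X_2'X_2)^{-1}X_2'X_1=L_1$ and hence $L_1\hat{\theta}_1+\hat{\theta}_2=\hat{\beta}_2$, the unconstrained least-squares (equivalently Gaussian MLE) estimator at the finest resolution; consistency and asymptotic efficiency for $\beta_2^*$ then follow from classical least-squares theory, since $\hat{\beta}_2$ attains the bound $\frac{\sigmasq_2}{n}\Omega_2^{-1}$. The general level-$j$ version follows by applying these same two ingredients inductively along the module recursion. The step requiring the most care is the bookkeeping of error variances — separating the model-based $\sigmasq_1$ that populates the posterior covariance blocks from the true $\sigmasq_2$ that governs the sampling distribution and efficiency of the accumulated estimator — together with confirming the telescoping cancellation above; everything else reduces to the cited Proposition and standard OLS asymptotics.
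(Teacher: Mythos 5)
Your proof is correct and follows the same route the paper intends: the paper's own proof is the single line that the corollary is ``a direct consequence of the properties of the Normal distribution,'' i.e.\ precisely the marginalization and the affine pushforward under $A=[\,L_1\;\;I\,]$ with the telescoping mean $L_1\beta_1^* + (\beta_2^*-L_1\beta_1^*)=\beta_2^*$ and the $(+,-,-,+)$ cancellation of the four $\sigmasq_1 L_1\Omega_1^{-1}L_1'$ terms that you carry out explicitly. Your additional care in reconciling the $\sigmasq_1$ in the Proposition's top-left block with the $\sigmasq_2$ claimed in the Corollary (via the sandwich/true-noise argument) and in identifying $L_1\hat{\theta}_1+\hat{\theta}_2$ with the least-squares estimator $\hat{\beta}_2$ for the efficiency claim goes beyond what the paper records, but it elaborates the same argument rather than replacing it.
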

\begin{proof}
	This is a direct consequence of the properties of the Normal distribution.
\end{proof}

\section{Computation of the modular posterior} \label{comp}
We sample from the modular posterior of \ref{bmms_general:posterior} by sequentially sampling from each module. 

\vspace{0.8cm}
\begin{algorithm}[H] \label{comp:algorithm_fix}
	\SetKwInput{KwStart}{Start}
	\SetKwInput{KwDo}{Do}
	\SetKwInput{KwFinally}{Finally}
	\SetKwInput{KwInput}{Input}
	\SetKwInput{KwOutput}{Output}
	
	\For{$t \in \{1, \dots, T \}$}{
		\KwStart{Draw sample $f_{1}^{(t)}$ from the module $1$ posterior $m(f_{1}) = p_{1}(f_1 | y, X_{1})$ }
		\For{$j \in \{2, \dots, K\}$}{
				Draw sample $f_{j}^{(t)}$ from the module $j$ posterior \begin{equation}\label{comp:cond_post} m(f_{j} | f_{1:j-1}^{(t)} ) = p_j(f_j | f_{1:j-1}^{(t)}, y, X_{1:j}) = p_M(f_j | \bm{f}_{-j}, y, X_{1:K}) \end{equation}
		}
	}
	\vspace{0.5cm}
	\caption{Sampling $\left\{ \bm{f}^{(t)} \right\}_{t \in \{1, \dots, T \}}$ from the modular posterior $p_M( \bm{f} | y, X_K )$}
\end{algorithm}
\vspace{0.4cm}
\noindent Obtaining a sample from the modular posterior depends on $m(f_{j} | f_{1:j-1}^{(t)} )$, which is determined by the module choice.  Therefore, in a multiscale linear regression with conjugate priors as in section \ref{modular_linear} we can easily sample from each individual module taking advantage of Eq. (\ref{modular_linear:joint_posterior}). In the more complex cases where $m(f_{j} | f_{j-1}^{(t)} )$ is approximated via MCMC, we can use the fact that for all $j$, $p_j(f_j | f_{1:j-1}, y, X_{1:j-1}) = p_M(f_j | f_{-j}, y, X_{1:K})$, i.e. the posterior distribution of each module corresponds to the full conditional distribution of the overall modular model, hence sampling from a module's posterior can be seen as a ``modular" Gibbs step. This also means that the computational complexity of \bmms is of the same order of magnitude of each component module, since $K$ is taken to be small.

\vspace{-0.25cm}
\section{Applications}\label{data} 
\vspace{-0.25cm}
\subsection{Simulation study} \label{data:simulated}
We generate $n=60$ observations from a linear regression model $ y = X \beta + \varepsilon$, with
\begin{align*}
\begin{matrix}
 \varepsilon \sim N(0, I_n) & \quad & \Omega = (\omega_{h,j})  \quad h,j = 1, \dots, p  & \quad & p = 128 & \\
 \mathbf{x}_i \sim N(0, \Omega)   & \quad & \omega _{h,j} = \exp \{ - (1-\rho) | h-j | \}  & \quad & \rho = 0.98
\end{matrix}
\end{align*}
and where $\beta$ is a $p \times 1$ vector obtained by discretizing the \textit{Doppler}, \textit{Blocks}, \textit{HeaviSine}, \textit{Bumps}, and \textit{Piecewise Polynomial} functions of \cite{donohojohnstone94, donohojohnstone95, nason_silverman} on a regular grid. Notice that covariates are highly correlated, and the sample size is relatively low, suggesting that the regression function may be challenging to estimate at a fine resolution. Our goals are thus: (1) dimension reduction and multiscale decomposition of $\beta$; (2) estimation and uncertainty quantification of the relative contributions of different scales; (3) out-of-sample prediction. The $k^{\text{th}}$ module, given $\theta_{k-1}$, is specified by $y_k = X\theta_k + \varepsilon_k$, with $\varepsilon_k \sim N(0, \sigmasq_k)$, $y_k = y_{k-1}-X\theta_{k-1}$ and $\theta_k = (\theta_{k, 1}, \cdots, \theta_{k, p} )'$ with
\begin{align} \label{data:changepoint}
\theta_{k, j} &= \begin{cases}
b_1, & \text{if}\ 0 < j \leq t_1 \\
\dots \\
b_{H_k}, & \text{if}\ t_{H_k-1} < j \leq t_{H_k}=p
\end{cases}
\end{align}
the goal being estimation of $b_1, \dots, b_{H_k}$ and the split locations $t_1, \dots, t_{H-1}$. We use 3 modules and fix $H_1 = 1, H_2=2, H_3=4$. This  allows us to estimate an unknown, hierarchical, 3-scale decomposition of $\beta$. MCMC approximations of the modular posterior decomposition of $\beta$ in the \textit{Heavi} and \textit{Blocks} cases are in Figure \ref{data:simulation:interpret}. 
\begin{figure}[H] 
	\centering
	\includegraphics[width=0.75\textwidth]{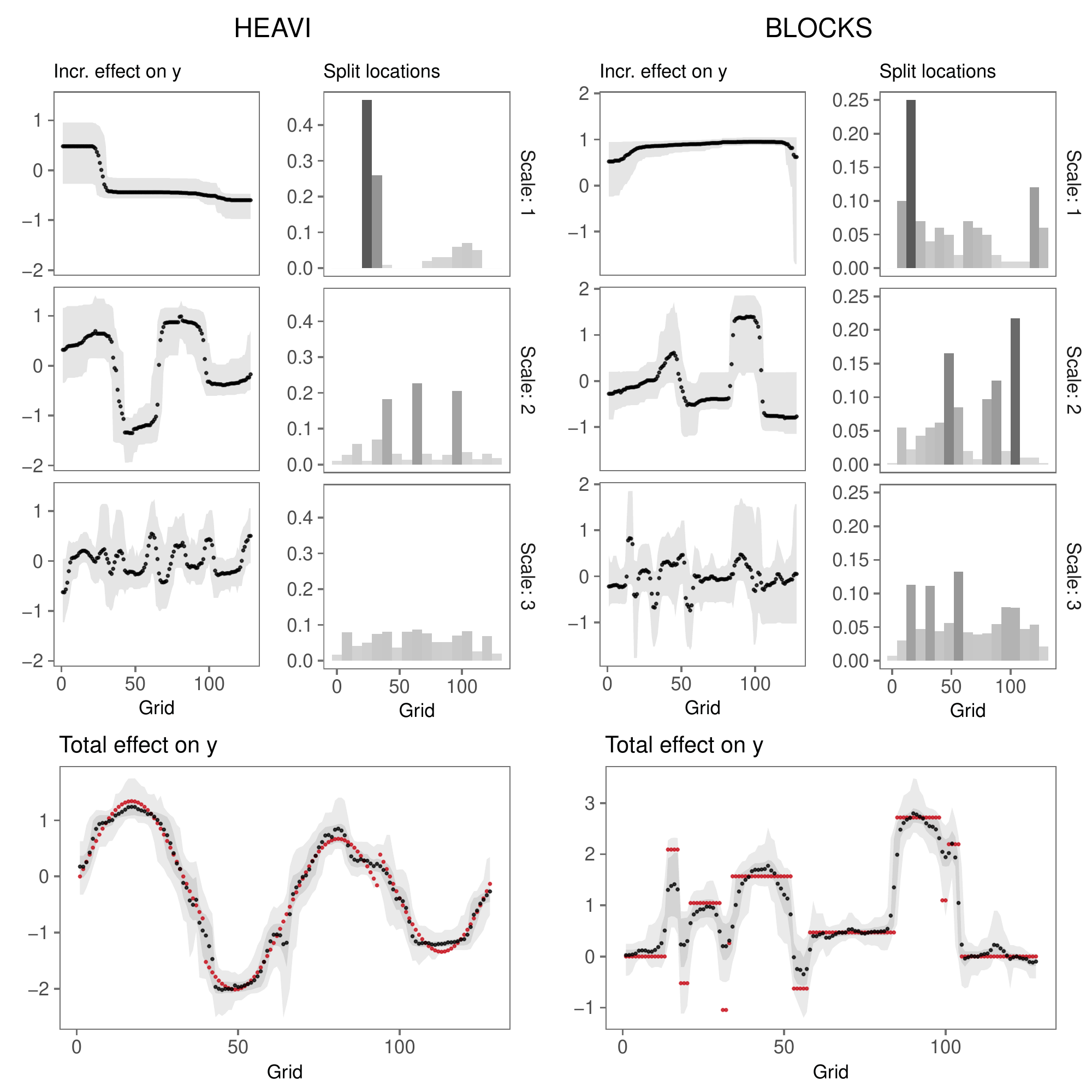}
	\caption{Model-averaged \bmms decomposition of $\beta$. The total effect is the sum of the intermediate scale contributions. In red, the true unknown $\beta$. For \textit{Heavi}, Scale 3 has little to no contribution to the overall estimation, as evidenced by the 95\% credible bands including 0. For \textit{Blocks}, Scale 3 refines on the previous ones in a few select locations.}\label{data:simulation:interpret} 
\end{figure}
\vspace{-0.5cm}
Within the above setup, \bmms compare favorably to the other models in almost all cases (Figure \ref{data:simulation:perform}), owing to an (implicit) multiscale structure (\textit{Doppler}, \textit{Blocks}), and/or non-smoothness (\textit{Blocks}, \textit{Bumps}).\footnote{Details on each implemented model are in Appendix \ref{appendix:key_models}.}
\begin{figure}[H] 
	\centering
	\includegraphics[width=\textwidth]{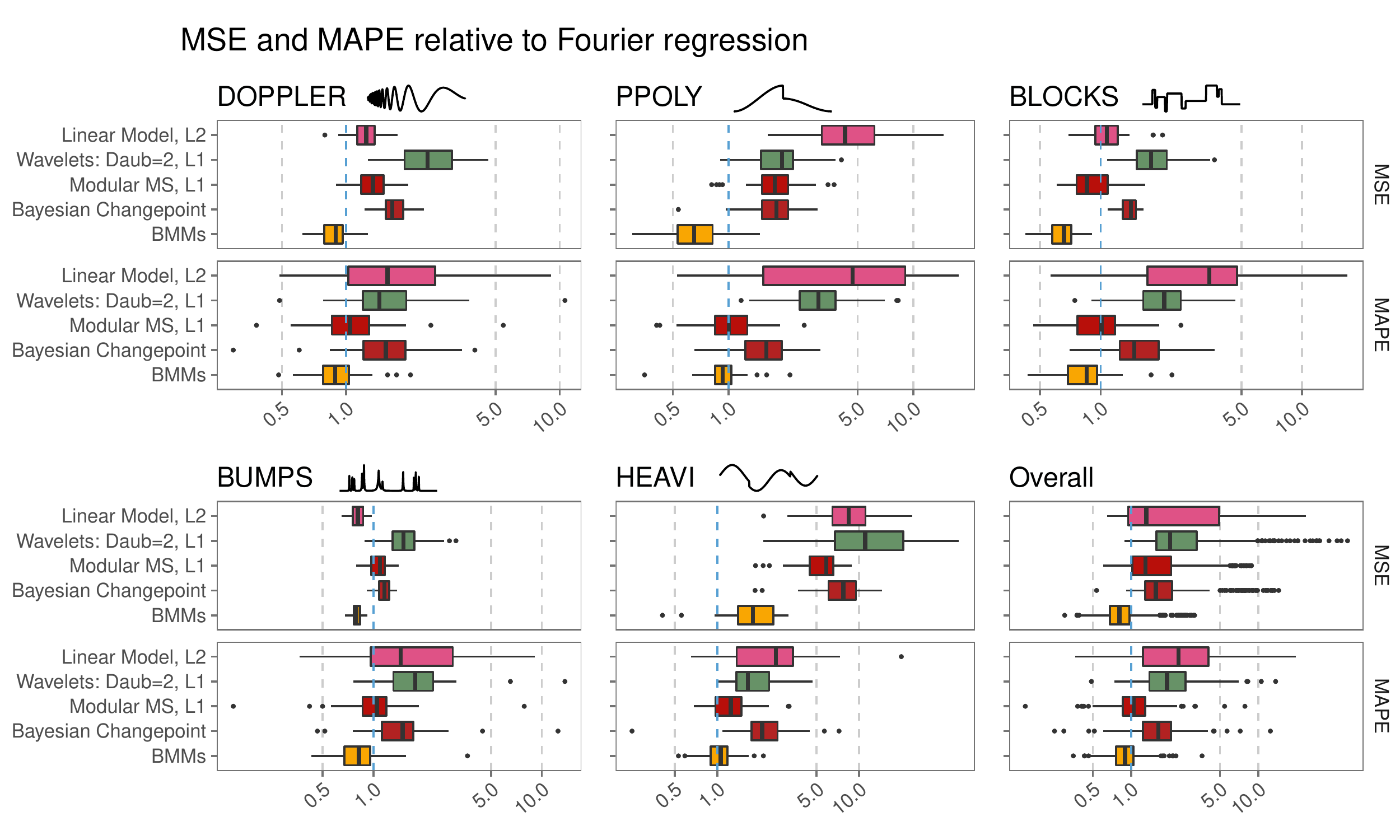}
	\caption{Mean Square Error in estimating $\beta$, and Mean Absolute Prediction Error on 100 samples of size $n_{\text{out}}=100$, relative to Fourier regression on the same data (blue dashed line).}\label{data:simulation:perform} 
\end{figure}
\vspace{-0.5cm}

\subsection{Gender differences in multiresolution tfMRI data} \label{data:brain}
Brain activity and connectivity data plays a central role in neuroscience research today, but increasingly higher-resolution medical imaging devices make management and analysis of such data challenging. We use data from the \textit{Human Connectome Project} (HCP) \citep{hcp_overview, hcp_pre1}, considering a sample of $n=100$ subjects, with the goal of classifying subjects' genders using brain activation data during task-based functional Magnetic Resonance Imaging (tfMRI). Gender differences have been observed in neuroscience and linked to brain morphology and connectivity \citep{hcp_gender2, hcp_gender4}, or task-based activity patterns \citep{hcp_gender1, hcp_gender3}. For an overview see \citet[Chapter~7]{neurosci101}. 
We use \bmms on tfMRI data recorded during the \textit{social} task in HCP, preprocessed according to the \textit{Gordon333} \citep{gordon333} parcellation. This hierarchical partitioning splits the brain into a multiscale structure of 333 regions, 26 lobes, and 2 hemispheres.
 
While we expect each region to have low explanatory power on its own, it is unclear whether grouping them to form a coarser structure might improve predictive power. In particular, while the coarse-scale 26 lobes are easily interpretable given their connection to known brain functions, they might not be an efficient coarsening for our predictive task. We thus implement \bmms in two ways, following the two multiscale points of view of Section \ref{bmms:intro}: (1) we consider the regions-lobes multiscale structure as specified by \cite{gordon333}; (2) we use regions' centroid information to adaptively collapse them into coarser groups. 
In both cases, we consider a binary regression model with probit link, and hence assume that $Pr(y_i=1) = \Phi \left( \mathbf{x}_{1, i} \theta_{1} + \dots + \mathbf{x}_{K, i} \theta_{K} \right)$, where $y_i$ is binary and corresponds to the subject's gender, and $\mathbf{x}_{j, i}$ is the same subject's data at resolution $j$. 

We adapt the Gibbs sampling algorithm of \cite{albertchib93} to \bmms. We thus introduce a latent variable $Z_i$ for each subject, and alternate sampling from $p(Z_i|y_i, \theta_{1:K}, \mathbf{x}_{i, 1:K})$ and $p_M(\theta_{1:K}|y, Z, X_{1:K})$. Here, $p(Z_i|y_i, \theta_{1:K}, \mathbf{x}_{i, 1:K})$ is the usual truncated Normal distribution, centered at $\mathbf{x}_{i, 1}\theta_{1} + \dots + \mathbf{x}_{i, K}\theta_{K}$, with unit variance, and truncation on zero to the left if $y_i=1$, to the right if $y_i=0$. On the other hand, $p_M$ denotes the modular posterior of a linear model as above.

In implementation (1), $X_{j}$ is a $n \times p_j$ matrix collecting subjects' brain activation data at resolutions $j\in \{1,2\}$ corresponding to regions and lobes, so that $p_1=333, p_2=26$. We use Bayesian Spike \& Slab modules to estimate $\theta_{j} = (\theta_{j,1} , \cdots , \theta_{j, p_j})$ for $j \in \{1, 2\}$. We are interested in estimating the posterior distributions of $\theta_{j}$, and to understand whether lobes provide additional information compared to the baseline of regions. The estimated posterior means are shown at the top of Figure \ref{data:brain:bmms}. In implementation (2), we fix $K=5$ to decompose the original higher-resolution parcellation. Implementation of each module follows the lines of (\ref{data:changepoint}), but using two-dimensional spatial information, as detailed in Appendix \ref{appendix:soi_reg}. The estimated posterior means are shown at the bottom of figure \ref{data:brain:bmms}. 
\begin{figure}
	\centering
	\includegraphics[width=0.9\textwidth]{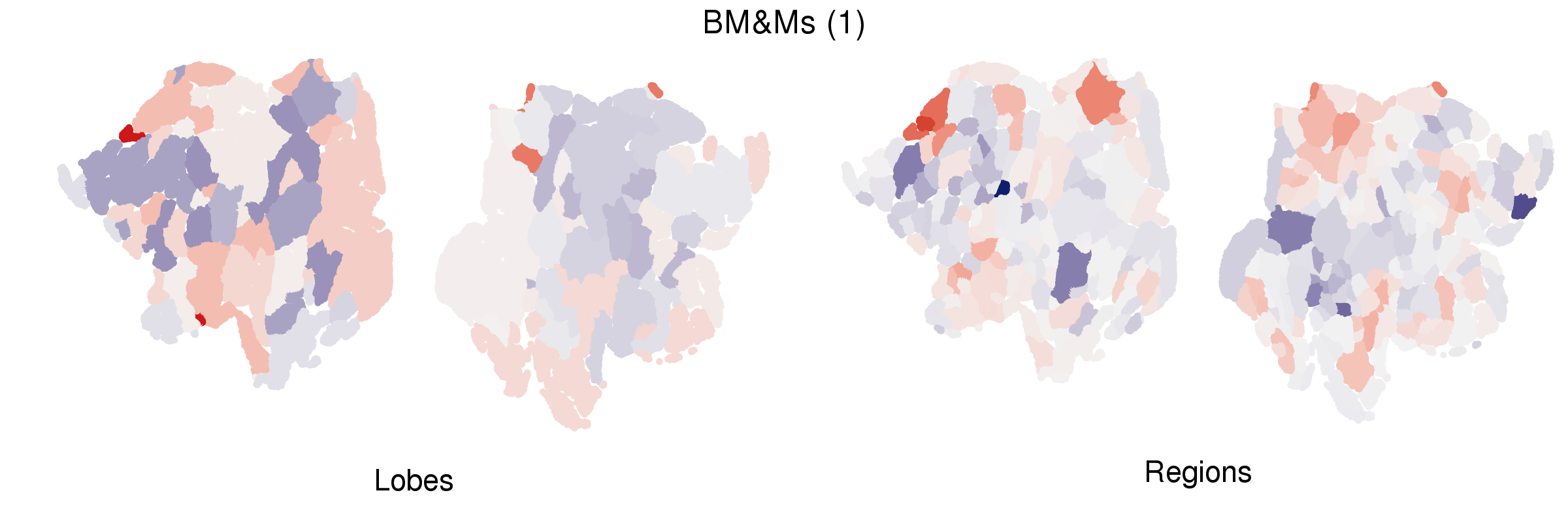}
	\includegraphics[width=0.85\textwidth]{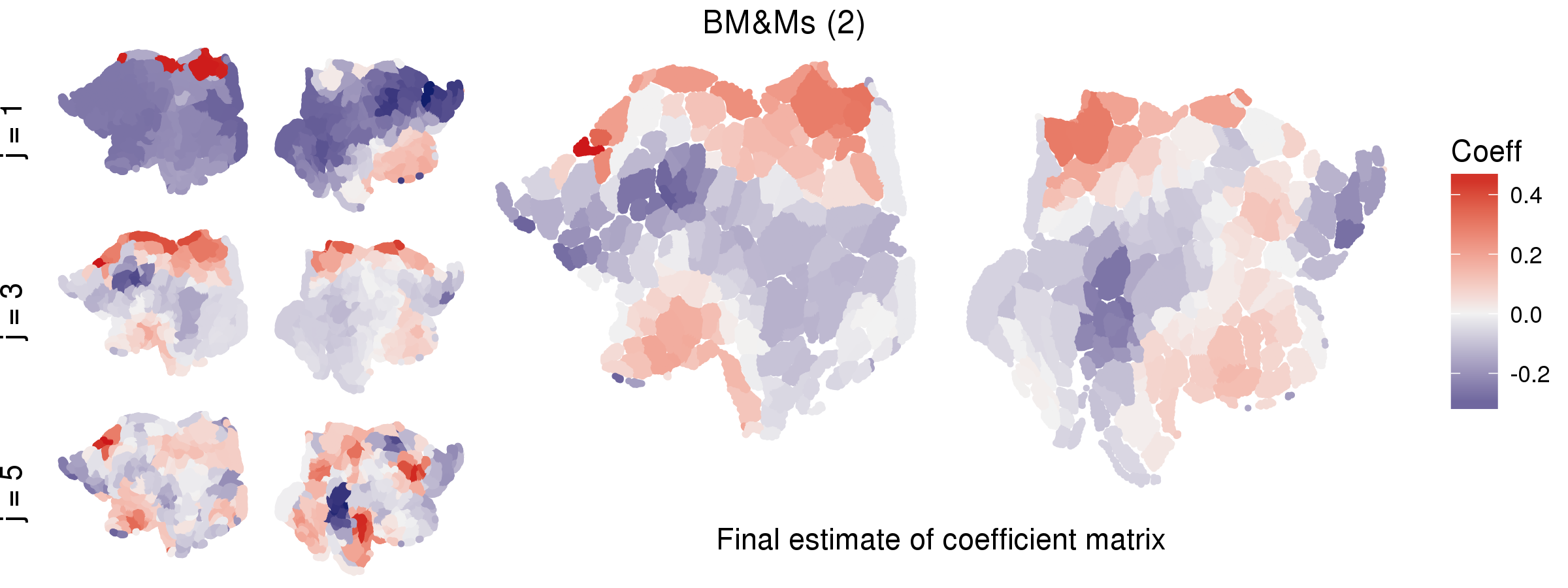}
	\caption{Posterior means estimated with \bmms. Top: lobes-regions multiscale structure. Bottom: Multiscale decomposition (left) of an estimated coefficient image (right) in a scalar-on-image regression setting.}\label{data:brain:bmms} 
\end{figure}
\begin{figure}
	\centering
	\includegraphics[width=0.85\textwidth]{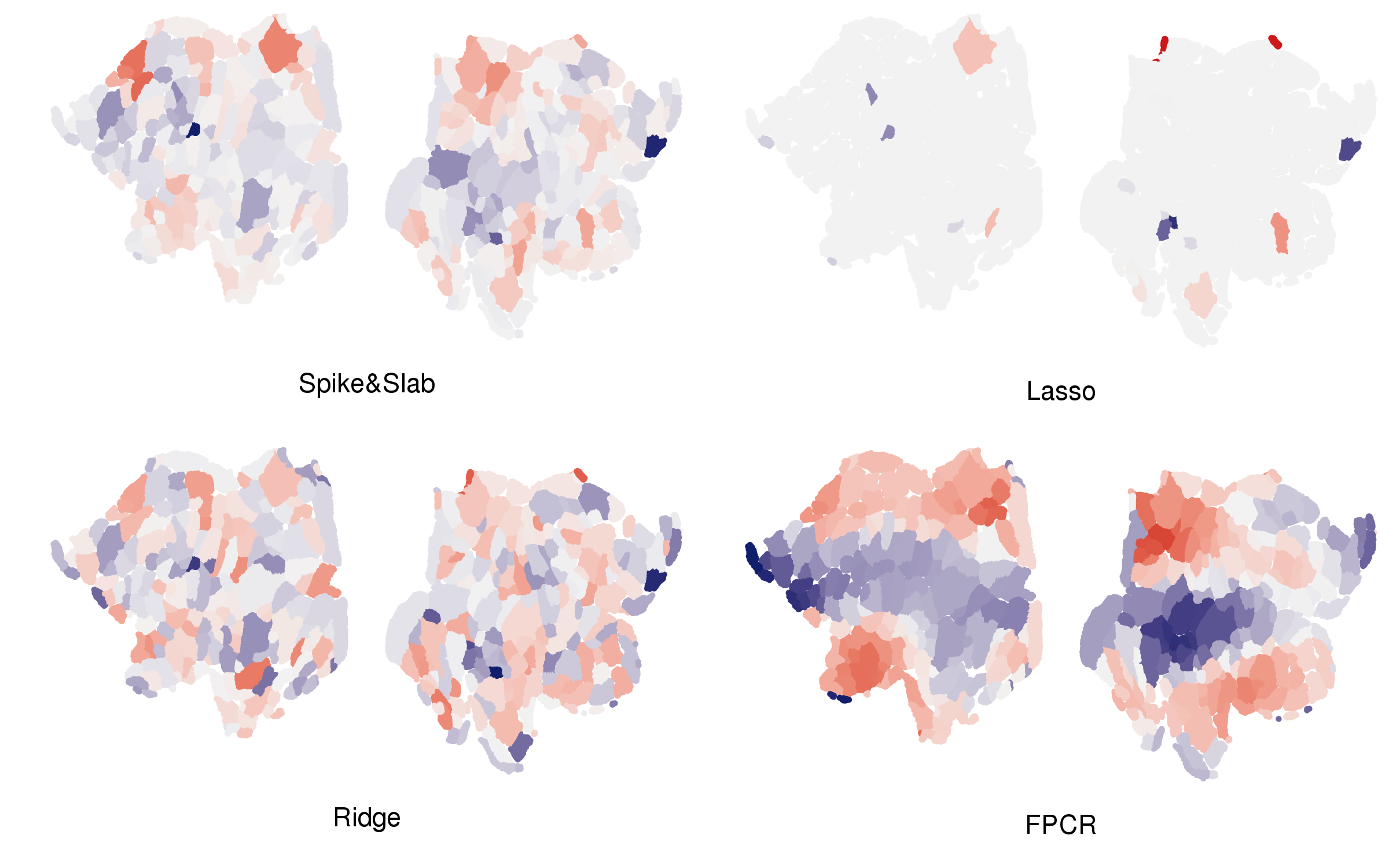}
	\caption{Coefficient on brain regions estimated via four alternative single-scale models.}\label{data:brain:others} 
\end{figure}
\begin{table}
	\small
	\centering
	\begin{tabular}{lrrccc}
	  \hline
	  Model & Accuracy & AUC & Multiscale & Bayesian & Scalar-on-Image \\ 
	  \hline
	 \bmms (1) & 0.806 & \textbf{0.889} & \checkmark & \checkmark &   \\ %
	 \bmms (2) & \textbf{0.810} & \textbf{0.889} & \checkmark & \checkmark & \checkmark  \\
	 Spike \& Slab & 0.800 & 0.883 & & \checkmark &   \\ %
	   Ridge & 0.792 & 0.871 & & &  \\
	   Lasso & 0.749 & 0.824 & & &  \\ 
	   Functional PCR & 0.781 & 0.856 & & & \checkmark  \\
	   \hline
	   \normalsize
\end{tabular}
\caption{Correct classification rate (\textit{Accuracy}), and area under ROC curve (\textit{AUC}), on random samples of size $n_{\text{out}} = 385$, averaged across 200 resamples of the data.}\label{data:brain:perform}
\end{table}

We compare \bmms to competing models on the same data. Table \ref{data:brain:perform} reports the out-of-sample performance of all tested models, whereas Figure \ref{data:brain:others} shows the estimation output of the competitors.\footnote{Refer to Appendix \ref{appendix:key_models} for details on the implemented models.} \bmms perform at least as well as competing models, while providing additional information on the multiscale structure. On one hand, ridge regression achieves good out-of-sample performance but the resulting estimates prevent a clear understanding of the results. Lasso regression would lead researchers to believe some regions might account for gender differences, yet its performance is the worst among tested models. Spike \& Slab priors result in many regions being infrequently selected, and this may lead researchers to believe that coarser scale information from lobes might be useful. However, applying \bmms on the regions-lobes multiscale structure results in little to no improvement in predictive power, with lobes being selected rarely. Finally, when considering the spatial structure of the brain regions, the estimated coefficient images of \bmms and FPCR are similar. However, the \bmms image can be decomposed in coarse-to-fine contributions (shown in Figure \ref{data:brain:bmms}, bottom left), which aid in the interpretation of the final estimate.

\section{Discussion}
In this article, we introduced a Bayesian modular approach that builds an overall model by the sequential application of increasingly more complex component modules. Our approach can be applied to multiscale regression problems in two common scenarios: (1) when multiple resolutions of the data could be used to model the regression relationship, to assess their contribution to the regression function; (2) when the focus of analysis is on a multiscale interpretation of results. Compared to established methods for multiscale regression such as wavelets, our method is more flexible and with the potential of easier interpretation. Both simulations and real data analysis show that this is not achieved at the expense of performance.

The modular posterior of \bmms is the product of each module's posterior. This implies that our method inherits properties of the chosen component modules. Choosing component modules requires clarity on what the objective of analysis is. For example, we showed in Section \ref{data} that we can use variable selection modules when the resolution structure is pre-specified, and changepoint-detection modules to find a multiscale interpretation. However, other module choices can be explored for different objectives, and overall models with mixed-type modules might offer additional interpretation opportunities. Additionally, our approach can also be used in non-parametric regression settings by considering the identity matrix $I_n$ as the high-resolution design matrix.

\bibliographystyle{hapalike}
\bibliography{biblio}

\appendix

\section{Notation}\label{notation}
\begin{table}[H]
\centering
\begin{tabular}{ll}
  \hline
 $y$ & output vector of size $n \times 1 $ \\
 $S_j$ & resolution of the data, $j \in \{ 1, \dots, K \}$ \\
 $\mathcal{S}$ & unknown ``true" resolution \\
 $X_{S_j} = X_j$ & design matrix at resolution $S_j$, size $n \times p_j$ \\
 $\mathcal{X}$ & data matrix corresponding to $\mathcal{S}$ \\
 $b$ & ``true" coefficient vector, such that $y = \mathcal{X}b + \varepsilon$ \\
 $\mathcal{L}_j$ & $p \times p_j$ matrix such that $\mathcal{X} \mathcal{L}_j = X_j$ \\
 $L_j$ & $p_{j+1} \times p_{j}$ matrix such that $X_{j+1} L_j = X_j$ \\
 $\beta_j$ & coefficient vector in single-resolution model $y = X_j \beta_j + \varepsilon_j$ \\
  $\mu_{\beta_j}$ & posterior mean for $\beta_j$ in the single-resolution model $y = X_j \beta_j + \varepsilon_j$ \\
  $\hat{\beta}_j$ & LS estimate of $\beta_j$  in the single-resolution model $y = X_j \beta_j + \varepsilon_j$ \\
 $\sigmasq_j$ & variance of $\varepsilon_j$ \\
 $L_j$ & coarsening operator such that $X_{j+1} L_j = X_j$ \\
 $\mathcal{L}_j$ & coarsening operator such that $X_K \mathcal{L}_j = X_j$ \\
 $\theta_j$ & $j$th component of the multiresolution parameter $\theta$ \\
 $\theta = (\theta_1 \cdots \theta_K)$ & multiresolution parameter for a K-resolution model \\
 $\beta_K = \mathcal{L}_1 \theta_1 + \dots + \mathcal{L}_K \theta_K $ & multiresolution decomposition of the coefficient vector \\
 $N(m_j, \sigmasq_j M_j)$ & prior for $\theta_j$ \\
 $N(\mu_j, \sigmasq_j \Sigma_j)$ & posterior for $\theta_j$ obtained with module $j$\\
   \hline
\end{tabular}
\end{table}

\section{Appendix to Section 2} \label{appendix:sec2}
\begin{proposition} 
	There exists a data-dependent prior $\pi_d$ and a likelihood $p_d$ such that $ p_M(\bm{f} | y, X) \propto \pi_d(\bm{f}) p_d(y | \bm{f}, X) $. Specifically, $p_d(\cdot)$ is the likelihood corresponding to model $ y = f_1(X_1) + f_2(X_2) + \varepsilon $, and the data-dependent prior is $\pi_d(\bm{f}) = \pi(f_1) \pi(f_2|f_1) \frac{p_1(f_1 | y, X_{1})}{p_2(f_1 | y, X_{1:2})}$.
\end{proposition}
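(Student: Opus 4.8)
The plan is to expand the two-module posterior $p_M(\bm{f}\mid y,X) = m(f_1)\,m(f_2\mid f_1)$ using the module definitions, and then reorganize the factors into the likelihood for the full model $y = f_1(X_1)+f_2(X_2)+\varepsilon$ multiplied by a prior that depends on the data only through the marginal posteriors of $f_1$ under each module. Since the claim is only up to proportionality in $\bm{f}$, I am free to multiply and divide by any quantity that is constant in $\bm{f}$.

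First I would substitute the explicit module expressions into the product. Module $1$ corresponds to the reduced model $y = f_1(X_1)+\varepsilon_1$ and contributes $m(f_1) = \pi(f_1)p_1(y\mid f_1,X_1)/p_1(y\mid X_1)$, while module $2$ corresponds to the full model and contributes $m(f_2\mid f_1) = \pi(f_2\mid f_1)p_2(y\mid f_{1:2},X_{1:2})/p_2(y\mid f_1,X_{1:2})$. I would then multiply and divide by the full-model evidence $p_2(y\mid X_{1:2})$, which is constant in $\bm{f}$ and hence harmless under $\propto$. This isolates the factor $p_2(y\mid f_{1:2},X_{1:2})/p_2(y\mid X_{1:2})$, which I read off as the full-model likelihood $p_d$ normalized by its evidence; everything else is then the data-dependent prior $\pi_d$.

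The heart of the argument is to simplify the leftover bracket $\pi(f_1)\pi(f_2\mid f_1)\cdot \frac{p_1(y\mid f_1,X_1)}{p_1(y\mid X_1)}\cdot\frac{p_2(y\mid X_{1:2})}{p_2(y\mid f_1,X_{1:2})}$ into the claimed form. For the first ratio I would apply Bayes' rule inside module $1$ to write $p_1(y\mid f_1,X_1)/p_1(y\mid X_1) = p_1(f_1\mid y,X_1)/\pi(f_1)$. For the second ratio I would apply Bayes' rule inside module $2$, marginalizing out $f_2$: since the modular prior factorizes as $\pi(f_1)\pi(f_2\mid f_1)$, the marginal prior of $f_1$ under $p_2$ is exactly $\pi(f_1)$, so $p_2(f_1\mid y,X_{1:2}) = \pi(f_1)p_2(y\mid f_1,X_{1:2})/p_2(y\mid X_{1:2})$, giving $p_2(y\mid X_{1:2})/p_2(y\mid f_1,X_{1:2}) = \pi(f_1)/p_2(f_1\mid y,X_{1:2})$. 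Multiplying the two and cancelling the $\pi(f_1)$ factors reduces the bracket to $p_1(f_1\mid y,X_1)/p_2(f_1\mid y,X_{1:2})$, yielding $\pi_d(\bm{f}) = \pi(f_1)\pi(f_2\mid f_1)\,p_1(f_1\mid y,X_1)/p_2(f_1\mid y,X_{1:2})$ as claimed.

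I expect the main obstacle to be bookkeeping rather than anything deep: one must track that $p_1(y\mid X_1)$ is the evidence under the reduced model while $p_2(y\mid X_{1:2})$ is the evidence under the full model, and that the $f_1$-marginal posterior $p_2(f_1\mid y,X_{1:2})$ arises only after integrating $f_2$ out of the full-model joint posterior. The one genuine assumption to flag is that the marginal prior for $f_1$ implied by $p_2$ coincides with $\pi(f_1)$; this holds because the modular prior is specified as $p_M(\bm{f}) = \pi(f_1)\pi(f_2\mid f_1)$, and it is precisely what makes the $\pi(f_1)$ terms cancel cleanly.
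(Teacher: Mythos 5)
Your proof is correct and follows essentially the same route as the paper's: expand the two-module posterior, multiply and divide by the full-model evidence $p_2(y\mid X_{1:2})$, read off $p_2(y\mid f_{1:2},X_{1:2})/p_2(y\mid X_{1:2})$ as the normalized likelihood, and convert the two remaining ratios into posterior-to-prior ratios via Bayes' rule. The only cosmetic difference is that you obtain $p_2(y\mid X_{1:2})/p_2(y\mid f_1,X_{1:2})=\pi(f_1)/p_2(f_1\mid y,X_{1:2})$ directly from Bayes' theorem for the marginal posterior of $f_1$ under model $p_2$, whereas the paper reaches the same identity by passing through the joint posterior of $(f_1,f_2)$ and cancelling a factor equal to one; the assumption you flag --- that the marginal prior of $f_1$ under $p_2$ is $\pi(f_1)$ --- is exactly the step the paper uses implicitly.
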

\begin{proof}
We consider without loss of generality the case with two modules represented by data $X_1$ and $X_2$, respectively. Here,  $p_1(\cdot)$ corresponds to the reduced model $y = f_1(X_1) + \varepsilon_1$, while $p_2(\cdot)$ to $ y = f_1(X_1) + f_2(X_2) + \varepsilon $. 
We build the modular posterior using the modules' posteriors, and multiply and divide by $p_2(y | X_{1:2})$:
\begin{align*}
p_M( \bm{f} | y, X ) &= m(f_1) m(f_2 | f_1) = \frac{\pi(f_1) p_1(y | f_1, X_1)}{p_1(y | X_1)} \frac{\pi(f_2|f_1) p_2(y | f_{1}, f_2, X_{1:2})}{p_2(y | f_{1}, X_{1:2})} \\
&= \pi(f_1) \pi(f_2|f_1) \frac{p_1(y | f_1, X_1) p_2(y | X_{1:2})}{p_2(y | f_{1}, X_{1:2}) p_1(y | X_1)} \frac{p_2(y | f_{1}, f_2, X_{1:2})}{p_2(y | X_{1:2})}.
\end{align*}
We thus interpret $\frac{p_2(y | f_{1}, f_2, X_{1:2})}{p_2(y | X_{1:2})}$ as the likelihood divided by the normalizing constant. This means the modular posterior can be obtained via the full model $y = f_1(X_1) + f_2(X_2) + \varepsilon$ where both $f_1$ and $f_2$ are unknown, and a prior on $f_1, f_2$ which depends on the data through the following factor:
\begin{align*}
& \frac{p_1(y | f_1, X_{1})p_2(y| X_{1:2})}{p_2(y | f_1, X_{1:2})p_1(y | X_{1})} = \frac{p_1(y | f_1, X_1)}{p_1(y | X_1)} \cdot \frac{p_2(y | X_{1:2})}{p_2(y | f_1, X_{1:2})} = \frac{p_1(f_1 | y, X_1)}{\pi_1(f_1)} \cdot \frac{p_2(y | X_{1:2})}{p_2(y | f_1, X_{1:2})}
\end{align*}
We then use the fact that according to model $p_2$,
\begin{align*}
p_2(y | X_{1:2}) &= \frac{\pi(f_1) \pi(f_2|f_1) p_2(y | f_1, f_2, X_{1:2}) }{p_2(f_1, f_2 | y, X_{1:2})} = \frac{\pi(f_1) \pi(f_2|f_1) p_2(y | f_1, f_2, X_{1:2}) }{p_2(f_2 | y, X_{1:2}, f_1) p_2(f_1 | y, X_{1:2})}.
\end{align*}
and hence we obtain that
\begin{align*}
&\frac{p_2(y | X_{1:2})}{p_2(y | f_1, X_{1:2})} = \frac{\pi(f_1)}{ p_2(f_1 | y, X_{1:2})} \cdot  \frac{\pi(f_2|f_1) p_2(y | f_1, f_2, X_{1:2})}{p_2(y | f_1, X_{1:2}) p_2(f_2 | y, X_{1:2}, f_1)}.
\end{align*}
Since $p_2(f_2 | y, X_{1:2}, f_1) = \frac{\pi(f_2|f_1) p_2(y | f_1, f_2, X_{1:2})}{p_2(y | f_1, X_{1:2})}$, we can write the modular posterior as 
\begin{align*}
p_M( \bm{f} | y, X ) &= \pi(f_1) \pi(f_2 | f_1) \frac{p_1(f_1 | y, X_{1})}{p_2(f_1 | y, X_{1:2})} \cdot \frac{p_2(y | f_{1:2}, X_{1:2})}{p_2(y|X_{1:2})}.
\end{align*}
\end{proof}

\section{Appendix to Section 3}

\subsection{Two-scale \textit{BM\&Ms} posterior} \label{appendix:twoscale_posterior}

The modular posterior of $\theta | \sigmasq_1, \sigmasq_2$ for the \textit{BM\&Ms} of model (\ref{modular_linear:two_scale:eq}) is
\begin{align} 
\theta = \begin{bmatrix}
\theta_1 & \theta_2
\end{bmatrix}' \sim N\left( \mu_{1:2}, \Sigma_{1:2} \right)
\end{align}
\begin{align*}
\begin{array}{ccc}
\mu_{1:2} = \begin{bmatrix} \mu_1 \\ \mu_2 \end{bmatrix} = \begin{bmatrix} \mu_{\beta_1} \\ \mu_{\beta_2} - Q_1 \mu_{\beta_1} \end{bmatrix} & \qquad & \Sigma_{1:2} = \begin{bmatrix} \sigmasq_1 \Sigma_1 & -\sigmasq_1 \Sigma_1 Q_1' \\ -\sigmasq_1 Q_1 \Sigma_1 & \sigmasq_2 \Sigma_2 + \sigmasq_1 Q_1 \Sigma_1 Q_1'  \end{bmatrix}
\end{array}
\end{align*}
where we denote $Q_1$ = $\Sigma_2X_2'X_1$, and $\mu_{\beta_j}$ with $j\in \{0,1 \}$ are the posterior means we would obtain from single resolution models of the form 
\begin{align}
y = X_j \beta_j + \varepsilon_j,
\end{align} 
when we assign prior $\beta_j \sim N(m_j, \sigmasq_j M_j)$.

We build \textbf{module 1}, $m(\theta_1)$ with a prior $\pi(\theta_1, \sigmasq_1)$ and a model $p_1(y | \theta_1, \sigmasq_1)$:
	\begin{align*}
	\pi(\sigma^2_1) \propto \frac{1}{\sigmasq_1} &\textalt{ }  \pi(\theta_{1} | \sigmasq_1) = N(m_1, \sigmasq_1 M_1) 
	\end{align*}
	\[ p_1(y | \theta_1, \sigmasq_1, X_1) = N(X_1 \theta_1, \sigmasq_1 I_n) \]
	From this we obtain the posterior
	\begin{align*}
	\begin{array}{lcl}
	p(\theta_{1} | \sigma^2_1, y, X_1) = N(\mu_1, \sigma^2_1 \Sigma_1) & \textalt{where} &
	\Sigma_1 = \left(M_1^{-1} + X_{1}'X_{1}\right)^{-1}\\  
	& \quad &
	\mu_1 = \Sigma_1 \left(M_1^{-1}m_1 + X_{1}'y \right)
	\end{array}
	\end{align*}
	Conditioning on $\theta_{1} = \tilde{\theta}_{1}$, we then build \textbf{module 1}:
	\begin{align*}
	\pi(\sigma^2_2) \propto \frac{1}{\sigmasq_2} &\textalt{ } 
	\pi(\theta_{1} | \sigmasq_2) = N(m_2, \sigmasq_2 M_2)
	\end{align*}
	\[
	p_2(y | \theta_1=\tilde{\theta}_1, \sigmasq_2, X_{1:2}) = N(X_1\tilde{\theta}_1 + X_2 \theta_2, \sigmasq_2 I_n) \]
	Hence, the posterior for $\theta_{1}|\theta_1$ will be 
	\begin{align*}
	\begin{array}{lcl}
	p(\theta_{1} | \theta_{1}=\tilde{\theta}_1, \sigma^2_2, y, X_{1:2}) = N(\mu_2, \sigmasq_2 \Sigma_2) & \textalt{where} &
	\Sigma_2 = \left(M_2^{-1} + X_{1}'X_{1}\right)^{-1}\\  
	& \quad &
	\mu_2 = \Sigma_2 \left(M_2^{-1}m_2 + X_{1}'(y - X_{1}\tilde{\theta}_{1}) \right)
	\end{array}
	\end{align*}
	We find the posterior distribution $\theta$ via modules 0 and 1:
	\begin{align*}
	\begin{array}{ccc}
	\theta_1 | \sigmasq_1, y, X_1 \sim N( \mu_1, \sigmasq_1 \Sigma_1) & \qquad & 
	\theta_2 | \theta_1, \sigmasq_1, \sigmasq_2, y, X_2 \sim N( \mu_2, \sigmasq_2 \Sigma_2 )
	\end{array}
	\end{align*}
	The end result follows from the properties of the Normal distribution.

\subsection{Asymptotics for \textit{BM\&Ms}} \label{appendix:bvm}
The goal of this section is to show that the modular posterior in linear models is approximately normal in large samples, with a mean that is a composition of the (rescaled) pseudo-true regression coefficients at different resolutions. In order to do so, it will be sufficient to show that each module results in approximately normal (conditional) posteriors. This will ensure normality of the overall modular posterior, by the properties of the normal distribution.

We consider response vector $y$ and data matrix $\mathcal{X}$, and following \cite{geweke2005} we assume that $(y, \mathcal{X})$ were generated according to a process such that 
\[ \frac{1}{n} \begin{bmatrix} y'y & y'\mathcal{X} \\ \mathcal{X}'y & \mathcal{X}'\mathcal{X} \end{bmatrix} \xrightarrow[a.s.]{} \begin{bmatrix} \omega_{yy} & \mathbf{\omega}_{y \mathbf{x}} \\ \mathbf{\omega}_{\mathbf{x} y} & \Omega \end{bmatrix}, \]
a positive definite matrix, and
\[ y | \mathcal{X}, \sigmasq \sim N(\mathcal{X} b, \sigmasq I_n), \] 
where $b \in \mathbb{R}^p$ with dimension not depending on the sample size, and $\sigmasq$ is known. We consider a finite set of predetermined resolutions $S_1, \dots, S_K$, corresponding to $X_1, \dots, X_K$, respectively, such that $X_{j} = \mathcal{X} \mathcal{L}_j$ and $X_{j+1}L_j = X_j$ for some $\mathcal{L}_j$ and $L_j$, $j \in \{1, \dots, K \}$. Here, $\mathcal{L}_j$ and $L_j$ are coarsening operators that perform partial sums of adjacent columns and hence reduce the data dimensionality to $p_j$, from $p$ or $p_{j+1}$, respectively. In other words, we assume $X_{j}$, the data at available resolution $S_j$, could be obtained as coarsening of the true-model data $\mathcal{X}$, and that each of the intermediate resolutions can be obtained as coarsening of higher resolutions. Given these assumptions, we have
\[ \frac{1}{n} \begin{bmatrix} y'y & y'X_j \\ X_j'y & X_j'X_j \end{bmatrix} \xrightarrow[a.s.]{} \begin{bmatrix} \omega_{yy} & \mathbf{\omega}_{y \mathbf{x}_j} \\ \mathbf{\omega}_{\mathbf{x}_j y} & \Omega_j \end{bmatrix}. \]
Note that we do not assume $\mathcal{L}_j = I_p$ necessarily for some $j$. In other words, if $S$ is the resolution of the data corresponding to the true regression coefficient $b$, it may hold that $S_j \neq S$ for all $j$. The goal is to find the ``best approximation" of $b$ at the different predetermined resolutions.

We consider without loss of generality the overall model
\[ y = X_1 \theta_1 + X_2 \theta_2 + \varepsilon, \] 
where $X_1 = \mathcal{X} \mathcal{L}_1$ and $X_2 = \mathcal{X} \mathcal{L}_2$, respectively. We will assume that the prior for $\theta_j$ when using component module $j$ has positive density on a neighborhood of the corresponding pseudo-true parameter value $\theta_j^*$. At the first stage, we use model
\[y = X_1 \theta_1 + \varepsilon_1 \]
with $\varepsilon_1 \sim N(0, \sigmasq_1)$. For now, we assume $\sigmasq_1$ is known. In large samples, at this resolution and with our assumptions, the Bayesian posterior will be approximated by $N(\beta^*_1, \sigmasq_1 \Omega_1^{-1})$, where $\beta^*_1 = \Omega_1^{-1} \mathbf{\omega}_{\mathbf{x}_1 y} = (\mathcal{L}_1' \Omega \mathcal{L}_1)^{-1} \mathcal{L}_1' \Omega b $ is the pseudo-true value of the regression coefficients $b$ at low resolution $S_1$, and $\Omega_1 = \mathcal{L}_1' \Omega \mathcal{L}_1$ is the second derivative of the log-likelihood of this module (this corresponds to results in \citealt{bda3}, \citealt{geweke2005}, \citealt{kleijn2012}). 
At the second stage, we fix $\theta_1$ and consider the model
\[y = X_1 \theta_1 + X_2 \theta_2 + \varepsilon_2 \]
with $\varepsilon_2 \sim N(0, \sigmasq_2)$. Here, we assume both $\theta_1$ and $\sigmasq_2$ are known. In large samples, at this resolution and with our assumptions, the Bayesian posterior of $\theta_2$ will be approximated by $N(\theta^*_2, \sigmasq_2 \Omega_2^{-1})$, where $\theta^*_1 = \Omega_2^{-1} \mathbf{\omega}_{\mathbf{x}_2 y} - \Omega_2^{-1} L_1 \Omega_1 \theta_1 = (\mathcal{L}_2' \Omega \mathcal{L}_2)^{-1} \mathcal{L}_2' \Omega b - (\mathcal{L}_2' \Omega \mathcal{L}_2)^{-1} \mathcal{L}_2' \Omega \mathcal{L}_1 \theta_1 = \beta^*_2 - (\mathcal{L}_2' \Omega \mathcal{L}_2)^{-1} \mathcal{L}_2' \Omega \mathcal{L}_1 \theta_1 = \beta^*_2 - L_1 \theta_1$ is the difference between $\beta^*_2$, i.e. the pseudo-true value of the regression coefficients $b$ at resolution $S_2$, and the rescaled $\theta_1$. $\Omega_2 = \mathcal{L}_2' \Omega \mathcal{L}_2$ is the second derivative of the log-likelihood of this module. 

The modular posterior is by definition (\ref{bmms_general:posterior}) the product of the two component modules' posteriors. Since both are normal in large samples, the modular posterior will also be approximately normal in large samples, with mean $\bar{\mu}_{1:2}$ and covariance matrix $ \bar{\Sigma}_{1:2}$:
\begin{equation*} 
\begin{aligned}
\begin{array}{ccc}
\bar{\mu}_{1:2} = 
\begin{bmatrix}
\theta_1^* \\
\theta_2^*
\end{bmatrix} = \begin{bmatrix}
\beta_1^* \\
\beta_2^* - L_1 \beta_1^*
\end{bmatrix} & \quad & 
\bar{\Sigma}_{1:2} =
\begin{bmatrix}
\sigmasq_1\Omega_1^{-1} & - \sigmasq_1 \Omega_1^{-1}Q'\\
- \sigmasq_1 L_1 \Omega_1^{-1} & \sigmasq_2 \Omega_2^{-1} +  \sigmasq_1 L_1 \Omega_1^{-1} L_1'
\end{bmatrix}
\end{array}
\end{aligned}
\end{equation*}
since $L_1 = (\mathcal{L}_2' \Omega \mathcal{L}_2)^{-1} \mathcal{L}_2' \Omega \mathcal{L}_1$. This follows from the properties of the multivariate normal distribution. Results analogous to the standard case of linear regression models follow when $\sigmasq_1$ and $\sigmasq_2$ are unknown. Similarly, asymptotic normality of the modular posterior is preserved when $K>2$.

\section{Appendix to Section 5}\label{appendix:key_models}
\subsection{Implemented models}
\begin{itemize}
	\item \bmms (in simulated data analysis): Bayesian Modular \& Multiscale Regression using Algorithm \ref{comp:algorithm_fix} with changepoint-detection modules
	\item \bmms (1): uses Algorithm \ref{comp:algorithm_fix} with \textit{Spike \& Slab} modules
	\item \bmms (2): uses Algorithm \ref{comp:algorithm_fix} with modules on 2D representation of Appendix \ref{appendix:soi_reg}
	\item \textit{Linear Model, L1} or \textit{Lasso}: lasso regression \citep{lasso} using cross-validation for $\lambda$ (from R package \texttt{glmnet})
	\item \textit{Linear Model, L2} or \textit{Ridge}: ridge regression \citep{ridge} using cross-validation for the ridge parameter (from R package \texttt{MXM})
	\item \textit{Modular MS, L1}: Modular (frequentist) Multiscale where each module is a L1-penalized regression. $\lambda$ found via 10-fold cross-validation
	\item \textit{Wavelets: Daub=2, L1 (frequentist)}: regression using wavelet-transformed data (using Daubechies extremal phase wavelets of order 2). See \cite{nason, lasso_wavelet} 
	\item \textit{Fourier regression}: functional linear model using Fourier-basis representation of the data, using cross-validation to estimate the number of basis elements
	\item \textit{Bayesian Changepoint}: regression coefficient $\beta$ assumed as in Equation \ref{data:changepoint}, with unknown $H$ and using Reversible-Jumps MCMC of \cite{green1995} to sample from the posterior
	\item \textit{Bayesian VS} or \textit{Spike \& Slab}: Bayesian Variable Selection using g-priors \citep{marin_robert}
	\item \textit{Functional PCR}: Functional Principal Component Regression, with cross-validated number of bases (from R package \texttt{refund}).
\end{itemize}

\subsection{BM\&Ms for scalar-on-image regression} \label{appendix:soi_reg}
In the general context of scalar-on-tensor regression, each observation $i$ corresponds to a scalar response $y_i$, and the predictor is a $D$-dimensional array $\mathbf{X}_i \in \mathbb{R}^{p_1 \times \cdots \times p_D}$. In the HCP data we consider, the response is binary and the tensor corresponds to an image, i.e. $D=2$. Each element $x_{i, (j_1, \dots, j_D)}$ of $\mathbf{X}_i$ is associated to $y_i$ via the corresponding regression coefficient $\beta_{j_1, \dots, j_D}$. These coefficients can be collected in a tensor $\mathbf{B}$ having the same size as $\mathbf{X}_i$. Thus in scalar-on-image regression we are estimating a coefficient \textit{matrix} (or image). The resulting regression model can be written, for $i=1, \dots, n$, as:
\[ y_i = vec(\mathbf{X_i})vec(\mathbf{B}) + \varepsilon_i,\]
which means we could implement standard methods for linear regression for the estimation of $vec(\mathbf{B})$. However, these are likely to be poor performers given the data dimensionality. Methods for scalar-on-tensor regression typically assume that there exists some simplifying decomposition for the $\mathbf{B}$ tensor \citep{btr, zhou_li_zhu, Li2018}, which reduces the number of free parameters. Instead, we consider a modular approach where each model corresponds to assuming $\mathbf{B}$ is a ``step-surface." If $D=1$, this reduces to the modules of Section \ref{data:simulated}, where the goal is to estimate the number of splits of the step functions, and identify their locations. With $D\geq 2$, we use a hierarchical Voronoi tesselation to represent the surfaces, with the goal of estimating their number and the locations of their centers. Figure \ref{voronoi} shows how a square image can be decomposed into increasingly finer-scale regions by using a hierarchical Voronoi structure. We thus represent a single-scale scalar-on-image regression model as
\[ y_i = \mathbf{X_i} \mathcal{L}_j \beta_j + \varepsilon_i, \]
where, for $i = 1, \dots, n$ we apply a coarsening operator $\mathcal{L}_j$ to the raw image $\mathbf{X_i}$, which will result in the estimation of a low resolution $\beta_j$ and the corresponding step-surface image $\mathcal{L}_j \beta_j$. Using this as a component module of \bmms, we can implement the overall model
\[ y_i = \mathbf{X_i} ( \mathcal{L}_1 \theta_1 + \dots + \mathcal{L}_K \theta_K) + \varepsilon_i, \]
following Section \ref{modular_linear}. Given each component module is at low resolution, i.e. the dimension of $\theta_j$ is small, we can use default Normal priors.

\begin{figure}
	\centering
	\includegraphics[width=0.2\textwidth]{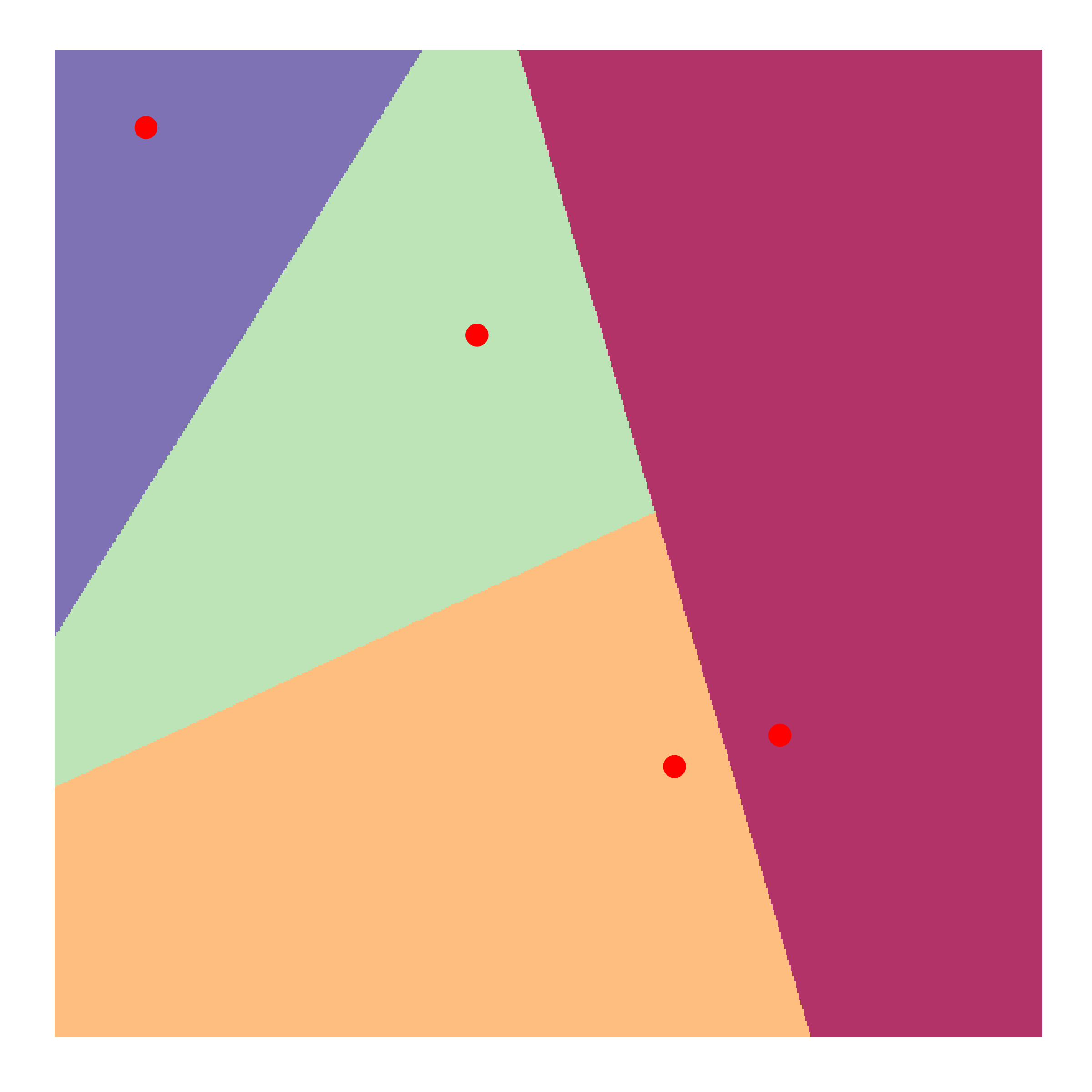}
	\includegraphics[width=0.2\textwidth]{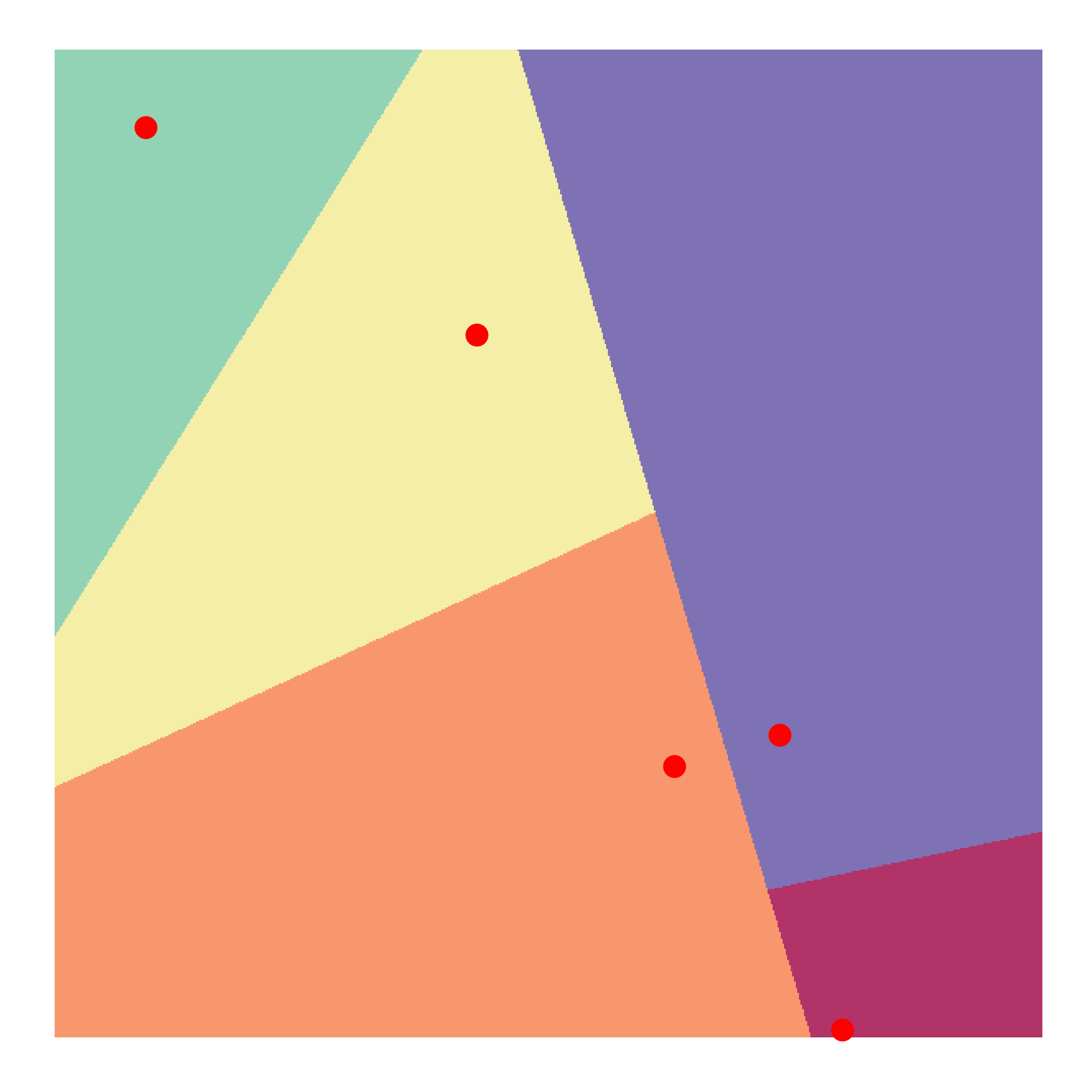}
	\includegraphics[width=0.2\textwidth]{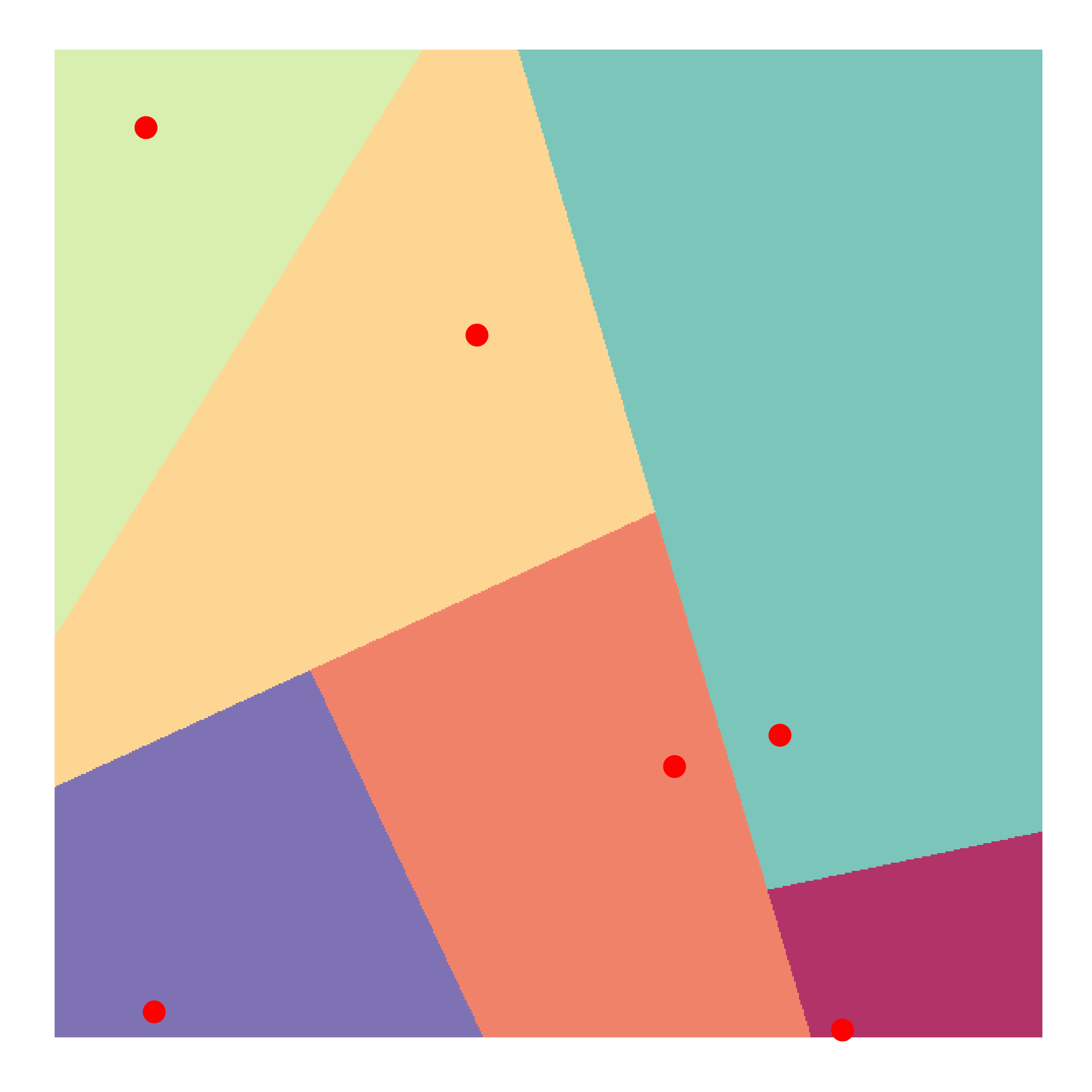}
	\includegraphics[width=0.2\textwidth]{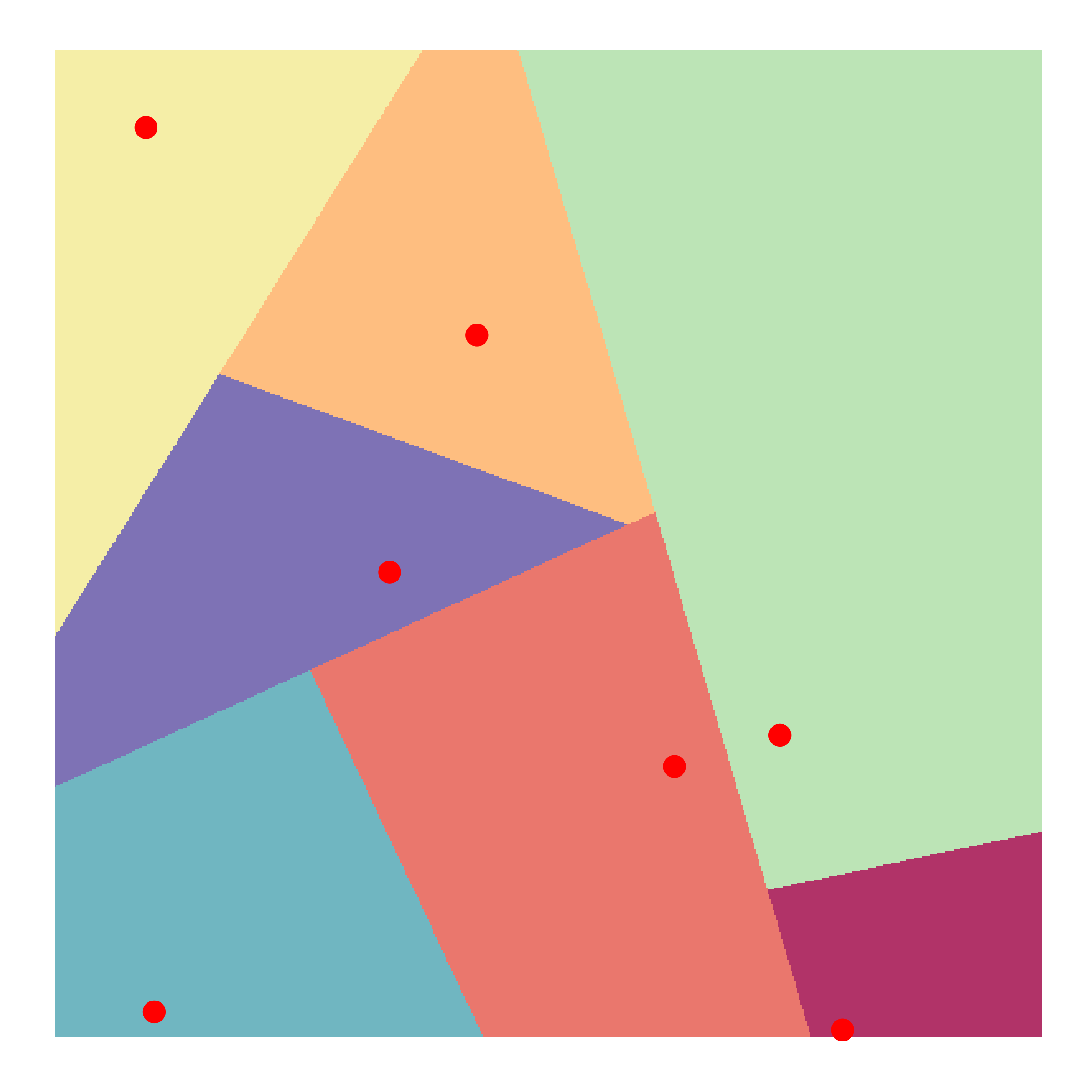}
	\caption{Hierarchical Voronoi representation of ``step-surfaces:" in each of the four images, each color corresponds to a single value.}\label{voronoi} 
\end{figure}

\pagebreak

\Huge Supplementary material
\normalsize

\section{A toy example} \label{appendix:toy}
Suppose only two measurements are taken from a sensor at times $t_1$ and $t_2$, to be used as inputs in regression. In our notation, $S_p = \{ t_1, t_2 \}$,  $|S_p| = p = 2$, $X_{S_p} = \begin{bmatrix} x_{t_1} & x_{t_2} \end{bmatrix} $. For simplicity, we call $x_1 = x_{t_1}$, $x_2 = x_{t_2}$, $X_1 = X_{S_p}$, $X_0 = x_{t_1} + x_{t_2} $, and we assume its correlation structure depends on parameter $r$ as follows:
\[ 
C(r) = \frac{1}{n}X_1'X_1 = \begin{bmatrix} 1 & r \\ r & 1 \end{bmatrix}
\]
which implies $ \frac{1}{n} X_0'X_0 = 2 + 2r $. We fix $\beta_1, \beta_2, \varepsilon \sim N(0, \sigmasq I_n)$ and set $\bar{\beta}_1 = \frac{1}{2}(\beta_1 + \beta_2)$. We then consider two models
\begin{equation*}
\begin{array}{ccc}
\text{(i)}\quad y = x_{1} \beta_1 + x_{2}\beta_2 + \varepsilon & \qquad & \text{(ii)}\quad y = (x_{1}+ x_{2})\bar{\beta}_1 + \varepsilon
\end{array}
\end{equation*}
Models (i) and (ii) consider the data at the high and low resolutions, respectively. Note that the KL divergence of (ii) from (i) is increasing in $| \beta_1 - \beta_2|$ and decreasing on $r$: high observed correlations (large positive $r$) between covariates make the lower resolution model a good approximation of the high resolution one, and thus we might prefer it, given its increased parsimony.

The KL divergence of the low-res likelihood from the high resolution one is
\[ KL(r) = \frac{n}{2 \sigmasq} (\bar{\beta} - \beta)'C(r)(\bar{\beta} - \beta) \]
which is a decreasing function of $r$, since $ \frac{\delta }{\delta r}KL(r) = \frac{n}{\sigmasq} (\bar{\beta}_1 - \beta_1)(\bar{\beta}_1 - \beta_2) \leq 0$. 


We implement \bmms as in Section \ref{modular_linear}, and get:\footnote{Full derivations in Section \ref{appendix:toy:mse}.}
\begin{equation*}\begin{array}{ccc}
\Sigma_0 = (2(r+1)(n+1))^{-1} & \qquad & 
\Sigma_1 = \frac{1}{n+1}\begin{bmatrix}
1 & r \\ r & 1
\end{bmatrix}^{-1} \\
\mu_0 = \frac{x_1'y  + x_2'y}{2(r+1)(n+1)} \xrightarrow[n \to \infty]{}  \frac{\beta_1 + \beta_2}{2} & \qquad & 
\mu_1 = \frac{1}{n+1}\begin{bmatrix}
1 & r \\ r & 1
\end{bmatrix}^{-1} \begin{bmatrix}
x_1'e_1 \\ x_2'e_1
\end{bmatrix} 
\xrightarrow[n \to \infty]{}  \begin{bmatrix}
\frac{\beta_1 - \beta_2}{2} \\
\frac{\beta_2 - \beta_1}{2}
\end{bmatrix}
\end{array}
\end{equation*}
Note how $\mu_0$ roughly corresponds to the average of the high-resolution coefficient vector, whereas $\mu_1$ -- which is interpreted as the added detail from the higher resolution -- to half differences. Finally, the asymptotic variance of $\theta_1 = (\theta_{11}, \theta_{12})$ becomes
\begin{align*}
AVar(\theta_1)=\frac{\sigmasq}{2(1-r)(r+1)} \begin{bmatrix} 3-r & 3r-1 \\  3r-1 & 3-r \end{bmatrix}
\end{align*}
and this shows that\footnote{Since the determinant of the 2x2 matrix is $8(1-r)(1+r)$ hence the determinant of the asymptotic variance is $\frac{8}{(1-r)(1+r)}$} $r \approx 1$ makes the higher resolution worthless. Otherwise, the coefficient vector at the highest resolution can be estimated consistently with \bmms via $\mu_{\beta_M}$:
\[\mu_{\beta_M} 
= L_0\mu_0  + \mu_1 = \begin{bmatrix}
1  & 1 & 0 \\ 1 & 0 & 1
\end{bmatrix} \begin{bmatrix} \mu_0 \\ \mu_1 \end{bmatrix} =  \begin{bmatrix} \mu_0  + \mu_{1,1} \\ \mu_0 + \mu_{1,2} \end{bmatrix} \xrightarrow[n \to \infty]{} \begin{bmatrix}
\frac{\beta_1 + \beta_2}{2} \\ \frac{\beta_1 + \beta_2}{2}
\end{bmatrix} + \begin{bmatrix}
\frac{\beta_1 - \beta_2}{2} \\
\frac{\beta_2 - \beta_1}{2}
\end{bmatrix} = \begin{bmatrix} \beta_1 \\ \beta_2 \end{bmatrix}  \]
We now consider the finite-sample frequentist MSE of
\begin{align*}
\mu^c_{\beta_{1,M}} &= L_0 \mu_0 + c \mu_1 = L_0 \mu_0 + cl\hat{\beta} - c\Sigma_1 X_1' X_0 \mu_0
= (1-cl)L_0 \mu_0 + c l\hat{\beta}
\end{align*}
with $c \in \{0,1\}$ and $l = \frac{n}{n+1}$. If we select $c=0$, we are estimating $\beta$ through $L_0\mu_0$, meaning that we completely discard the contribution of the high resolution. The MSE of this estimator for $c=0$ and $c=1$ is, respectively:
\begin{align*}
\text{MSE}(\mu^{c=0}_{\beta_{1,M}}) &= ((l-2)^2 + l^2) \frac{\beta_1^2 + \beta_2^2}{4} + l(l-2) \beta_1 \beta_2 + \frac{\sigmasq_1}{n(1+r)} \\
\text{MSE}(\mu^{c=1}_{\beta_{1,M}}) &= (1-l)^2 ((l-2)^2 + l^2)\frac{\beta_1^2 + \beta_2^2}{4} + (1-l)^2 l (l-2) \beta_1 \beta_2 + \frac{2 l^2 \sigmasq_1}{n(1+r)(1-r)}
\end{align*}
First, for $n \to \infty$ the bias term approaches $0$ only for $c=1$, whereas the variance will decrease in both cases with $n$. A more relevant scenario is $r \approx 1$ and/or $\beta_1 \approx \beta_2$: if $c=1$ the variance diverges when $r \to 1$. In other words, if $r$ is large, considering the two measurements separately leads to a large expected error. Similarly, $\beta_1 \approx \beta_2$ results in 
\begin{align*}
\text{MSE}(\mu^{c=0}_{\beta_{1,M}})_{\beta_1\approx \beta_2} &= 2(1-l)^2{\beta_1^2} + \frac{\sigmasq_1}{n(1+r)} \\
\text{MSE}(\mu^{c=1}_{\beta_{1,M}})_{\beta_1\approx  \beta_2} &= 2(1-l)^4{\beta_1^2} + \frac{2 l^2 \sigmasq_1}{n(1+r)(1-r)}
\end{align*}
meaning that the bias term is almost equalized, but favoring $c=0$, whereas the comparison on variance entirely depends on $r$: the closer the two coefficients $\beta_1$ and $\beta_2$ are to each other, the closer to zero $r$ must be to make it worth it to consider the high resolution. Ultimately, this shows how considering the data at the highest resolution might be counterproductive. An alternative way to visualize why $c=1$ may be suboptimal is to look at $\mu^c_{\beta_{1,M}}$ (with $c \in [0,1]$ ) as an estimator that shrinks towards the lower resolution coefficient function.

\subsection{Modules for BM\&Ms in example} \label{appendix:toy:modules}
\textbf{Module 0:}
\[ y = \theta_0 (x_{1} + x_{2}) + \varepsilon_0 \textalt{where} \varepsilon_0 \sim N(0, \sigmasq_0 I_n ) \]
with prior parameters $m_0 = 0$ and $M_0 = n ((x_1 + x_2)'(x_1 + x_2))^{-1} = \frac{1}{2} (r+1)^{-1}$. We get
\begin{align*}
\Sigma_0 &= (2(r+1) + 2n(r+1))^{-1} = (2(r+1)(n+1))^{-1}\\
\mu_0 &= \frac{(x_1 + x_2)'y}{2(r+1)(n+1)} = \frac{n \frac{x_1'y}{n} + n\frac{x_2'y}{n}}{2(r+1)(n+1)} \xrightarrow[n \to \infty]{}  \frac{\beta_1 + \beta_2}{2}
\end{align*}
\textbf{Module 1:}
\[ e_1 = \begin{bmatrix} x_1 & x_2 \end{bmatrix}\begin{bmatrix} \theta_{11} \\ \theta_{12} \end{bmatrix} + \varepsilon_1 \textalt{where} \varepsilon_1 \sim N(0, \sigmasq_1) \]
where $e_1 = y - (x_1 + x_2)\frac{(x_1 + x_2)'y}{2(r+1)(n+1)}$. In this case we set the prior parameters as
\begin{align*}
\begin{array}{ccc} m_1 =\begin{bmatrix}
m_{11} & m_{12}
\end{bmatrix} = \begin{bmatrix}
0 & 0
\end{bmatrix} & \qquad & 
M_1 = n\begin{bmatrix}
1 & r \\ r & 1 
\end{bmatrix}^{-1},
\end{array} 
\end{align*}
hence the posterior parameters are
\begin{align*}
\Sigma_1 &= \frac{1}{n+1}\begin{bmatrix}
1 & r \\ r & 1
\end{bmatrix}^{-1}\\
\mu_1 &= \frac{1}{n+1}\begin{bmatrix}
1 & r \\ r & 1
\end{bmatrix}^{-1} \begin{bmatrix}
x_1'e_1 \\ x_2'e_1
\end{bmatrix} \\ &= \frac{1}{n+1}\begin{bmatrix} 1 & r \\ r & 1 \end{bmatrix}^{-1} \begin{bmatrix} x_1'y - x_1'(x_1 + x_2)\frac{(x_1 + x_2)'y}{2(r+1)(n+1)} \\ x_2'y - x_2'(x_1 + x_2)\frac{(x_1 + x_2)'y}{2(r+1)(n+1)} \end{bmatrix}\\
&\xrightarrow[n \to \infty]{} \begin{bmatrix} 1 & r \\ r & 1 \end{bmatrix}^{-1} \begin{bmatrix} \beta_1 + r\beta_2 - (r+1)\frac{\beta_1 + \beta_2}{2} \\ r\beta_1 + \beta_2 - (r+1)\frac{\beta_1 + \beta_2}{2} \end{bmatrix}\\ &= 
\frac{1}{r+1}\begin{bmatrix} 1 & -r \\ -r & 1 \\ \end{bmatrix} \begin{bmatrix} \frac{1}{2}\beta_1  - \frac{1}{2}\beta_2 \\ \frac{1}{2}\beta_2  - \frac{1}{2}\beta_1 \end{bmatrix} 
\frac{1}{r+1}\begin{bmatrix} 1 & -r \\ -r & 1 \\
\end{bmatrix} (\beta - L_0C_0 \beta) = \begin{bmatrix}
\frac{\beta_1 - \beta_2}{2} \\
\frac{\beta_2 - \beta_1}{2}
\end{bmatrix},
\end{align*}
where $C_0 = \begin{bmatrix}\frac{1}{2} & \frac{1}{2}\end{bmatrix}$.  $\Sigma_1$ is the posterior covariance of $\theta_1|\theta_0$, whereas the asymptotic variance of $\theta_1 = (\theta_{11}, \theta_{12})$ is
\begin{align*}
AVar(\theta_1) &= \Omega_1^{-1} + \sigmasq L_0 \Omega_0^{-1} L_0' \\
&= \sigmasq \left( \begin{bmatrix} 1& r \\ r & 1 \end{bmatrix}^{-1} + \frac{1}{2(r+1)}L_0 L_0' \right)\\
 &= \sigmasq \left( \frac{1}{(1-r)(r+1)}\begin{bmatrix} 1& -r \\ -r & 1 \end{bmatrix} + \frac{1}{2(r+1)}\begin{bmatrix} 1& 1 \\ 1 & 1 \end{bmatrix}\right)\\
&= \frac{\sigmasq}{r+1} \left( \frac{1}{1-r}\begin{bmatrix} 1& -r \\ -r & 1 \end{bmatrix} + \frac{1}{2}\begin{bmatrix} 1& 1 \\ 1 & 1 \end{bmatrix}\right) = \frac{\sigmasq}{r+1} \left( \begin{bmatrix} \frac{1}{1-r}& \frac{-r}{1-r} \\ \frac{-r}{1-r} & \frac{1}{1-r} \end{bmatrix} + \begin{bmatrix} \frac{1}{2} & \frac{1}{2} \\ \frac{1}{2} & \frac{1}{2} \end{bmatrix}\right) \\
&= \frac{\sigmasq}{r+1} \begin{bmatrix} \frac{2 + 1-r}{2(1-r)} & \frac{-2r + 1-r}{2(1-r)} \\  \frac{-2r + 1-r}{2(1-r)} & \frac{2 + 1-r}{2(1-r)}  \end{bmatrix} = \frac{\sigmasq}{r+1} \begin{bmatrix} \frac{3-r}{2(1-r)} & \frac{1-3r}{2(1-r)} \\  \frac{1-3r}{2(1-r)} & \frac{3-r}{2(1-r)}  \end{bmatrix} \\
&=\frac{\sigmasq}{2(1-r)(r+1)} \begin{bmatrix} 3-r & 3r-1 \\  3r-1 & 3-r \end{bmatrix}
\end{align*}
\subsection{MSE for $\mu^c_{\beta_1, M}$} \label{appendix:toy:mse}
The expected value of the modular estimator for $\beta$ is
\begin{align*}
E\left[ \mu^c_{\beta_{1,M}} \right] &= E\left[ (1-cl)L_0 \mu_0 + c l\hat{\beta} \right] \\
&= E\left[ l(1-cl) L_0 (X_0'X_0)^{-1}X_0'y + c l\hat{\beta} \right] \\
&= E\left[ l(1-cl) L_0 (X_0'X_0)^{-1}X_0'(X_1 \beta + \varepsilon_1) \right] + c l\beta
\end{align*}
\begin{align*}
&= l(1-cl) L_0 (X_0'X_0)^{-1}X_0'X_1 \beta + c l\beta \\
&= (l(1-cl) L_0 (X_0'X_0)^{-1}X_0'X_1  + c l)\beta \\
&= (l(1-cl) \begin{bmatrix}
\frac{1}{2n(r+1)} \\ \frac{1}{2n(r+1)}
\end{bmatrix} \begin{bmatrix} n(1+r) & n(1+r) \end{bmatrix}  + c l)\beta \\
&= l(1-cl) \begin{bmatrix} \frac{\beta_1 +\beta_2}{2} \\ \frac{\beta_1 + \beta_2}{2} \end{bmatrix} + cl \begin{bmatrix} \beta_1 \\ \beta_2 \end{bmatrix}
\end{align*}
so that the estimator has bias
\begin{align*}
E\left[ \mu^c_{\beta_{1,M}} \right] - \beta &= \frac{1-cl}{2} \begin{bmatrix} l-2 & l \\ l & l-2 \end{bmatrix} \beta 
\end{align*}
which, as expected, is smaller for $l \approx 1$. We also obtain
\begin{align*}
\text{Bias}^2 &= \frac{(1-cl)^2}{4} \beta' \begin{bmatrix} (l-2)^2 + l^2 & 2(l-2)l \\ 2(l-2)l & (l-2)^2 + l^2 \end{bmatrix} \beta \\
&= (1-cl)^2 ((l-2)^2 + l^2)\frac{\beta_1^2 + \beta_2^2}{4} + (1-cl)^2 l (l-2) \beta_1 \beta_2
\end{align*}
We then move to calculating the variance of $\mu_{\beta}$.
\begin{align*}
Var(\mu^c_{\beta_{1,M}}) &= Var( (1-cl) L_0 (X_0'X_0)^{-1}X_0'y + cl (X_1'X_1)^{-1} X_1'y ) \\
&= Var( ( (1-cl) L_0 (X_0'X_0)^{-1} X_0' + cl (X_1'X_1)^{-1}X_1') \varepsilon ) \\
&= \sigmasq_1 ((1-cl) L_0 (X_0'X_0)^{-1}X_0' + cl(X_1'X_1)^{-1} X_1')((1-cl) L_0 (X_0'X_0)^{-1}X_0' + cl(X_1'X_1)^{-1} X_1')' \\
&= \sigmasq_1 ( (1-cl)^2 L_0 (X_0'X_0)^{-1} L_0' + cl(1- cl) L_0 (X_0'X_0)^{-1}X_0'X_1(X_1'X_1)^{-1} \\
& \qquad + cl(1-cl)(X_1'X_1)^{-1}X_1'X_0 (X_0'X_0)^{-1}L_0' + c^2 l^2 (X_1'X_1)^{-1}) 
\end{align*}
\begin{align*}
&= \sigmasq_1 \left( \frac{(1-cl)^2}{2n(r+1)}\begin{bmatrix} 1 & 1 \\ 1& 1 \end{bmatrix} +\frac{cl(1-cl)}{2n(r+1)(1-r^2)} \begin{bmatrix} 1 \\1  \end{bmatrix} \begin{bmatrix} 1+r & 1+r \end{bmatrix} \begin{bmatrix} 1 & -r \\ -r & 1 \end{bmatrix}  +\right. \\
&\qquad + \left. \frac{cl(1-cl)}{n(1-r^2)(r+1)} \begin{bmatrix} 1 & -r \\ -r & 1  \end{bmatrix}\begin{bmatrix} 1+r \\ 1+r \end{bmatrix}\begin{bmatrix} 1 & 1 \end{bmatrix} + \frac{c^2 l^2 }{n(1-r^2)} \begin{bmatrix} 1 & -r \\ -r & 1\end{bmatrix}\right)\\
&= \sigmasq_1 \left( \frac{(1-cl)^2}{2n(r+1)} \begin{bmatrix} 1 & 1 \\ 1& 1 \end{bmatrix} +  \frac{cl(1-cl)}{2n(1-r)(1+r)} \begin{bmatrix} 1-r & 1-r \\ 1-r & 1-r \end{bmatrix} + \right.\\
&\left. \qquad + \frac{cl(1-cl)}{2n(1+r)} \begin{bmatrix} 1 & 1 \\ 1& 1 \end{bmatrix} + \frac{c^2 l^2 }{n(1-r^2)} \begin{bmatrix} 1 & -r \\ -r & 1 \end{bmatrix}  \right) \\
&= \frac{\sigmasq_1}{n(1+r)} \left( \frac{(1-cl)^2}{2} \begin{bmatrix} 1 & 1 \\ 1& 1 \end{bmatrix} +  cl(1-cl)\begin{bmatrix} 1 & 1 \\ 1 & 1 \end{bmatrix} + \frac{c^2 l^2 }{1-r} \begin{bmatrix} 1 & -r \\ -r & 1 \end{bmatrix}  \right)
\end{align*}
And then finally
\begin{align*}
\text{Tr}(\text{Var}(\mu^c_{\beta_{1,M}})) &= \frac{\sigmasq_1}{n(1+r)} \left( (1-cl)(1+cl) + \frac{2c^2 l^2}{1-r}  \right)
\end{align*}
So that the modular estimator has mean square error:
\begin{align*}
\text{MSE}(\mu^c_{\beta_{1,M}}) & = (1-cl)^2 ((l-2)^2 + l^2)\frac{\beta_1^2 + \beta_2^2}{4} + (1-cl)^2 l (l-2) \beta_1 \beta_2 +\\
& \qquad + \frac{\sigmasq_1}{n(1+r)} \left( (1-cl)(1+cl) + \frac{2c^2 l^2}{1-r}  \right)
\end{align*}

\end{document}